\newif\ifEprint
\newif\ifDynamic
\title{Verifiable Member and Order Queries on a List in Zero-Knowledge}
\author{Esha Ghosh\thanks{Brown University} \and Olga Ohrimenko\thanks{Microsoft Research} \and Roberto Tamassia\footnotemark[1]}
\date{}
\newcommand{\List}{\mathcal{L}}
\newcommand{\Q}{\mathsf{Query}}
\newcommand{\R}{\Omega_\llist}
\newcommand{\Rgen}{\Omega}
\newcommand{\Sig}{\Sigma_\mathcal{L}}
\newcommand{\ZZ}{\mathbb{Z}}
\newcommand{\advM}{\mathcal{A}}
\newcommand{\authunitm}[1]{$t_{{#1}\in{\mathcal{L}}}$}
\newcommand{\authunitoM}[2]{t_{{#1}<{#2}}}
\newcommand{\authunito}[2]{$t_{{#1}<{#2}}$}
\newcommand{\elem}{\mathsf{Elements}}
\newcommand{\ele}{$\elem$}
\newcommand{\li}{$\mathcal{L}$}
\newcommand{\llist}{\mathcal{L}}
\newcommand{\permM}[2]{{\pi}_{#1}({#2})}
\newcommand{\perm}[2]{$\permM{#1}{#2}$}
\newcommand{\ppadsl}{$\mathsf{PPAL}$}
\newcommand{\ppads}{\mathsf{PPAL}}
\newcommand{\p}{\mathsf{poly(k)}}
\newcommand{\query}{\mathsf{\delta}}
\newcommand{\salt}{\mathsf{salt}}
\newcommand{\setupM}{\mathsf{Setup}}
\newcommand{\setup}{$\mathsf{Setup}$}
\newcommand{\sublist}{\delta}
\newcommand{\GG}{\mathcal{G}}
\newcommand{\D}{\mathsf{digest}_C}
\newcommand{\K}{\mathsf{digest}_S}
\newcommand{\Or}{\mathsf{order}}
\newcommand{\pr}{\mathsf{proof}}
\newcommand{\ve}{\mathsf{Verify}}
\newcommand{\Sim}{\mathsf{Sim}}
\newcommand{\advN}{\mathsf{Adv}}
\newcommand{\st}{\mathsf{state}}
\newcommand{\ZS}{\mathsf{Setup}}
\newcommand{\ZP}{\mathsf{Prover}}
\newcommand{\ZV}{\mathsf{Verifier}}
\newcommand{\f}{\mathsf{flag}}
\newcommand{\PK}{\mathsf{PK}}
\newcommand{\com}{\mathsf{com}}
\newcommand{\mem}{\mathsf{member}}
\newcommand{\TK}{\mathsf{TK}}
\newcommand{\HC}{\mathsf{HardComm}}
\newcommand{\SC}{\mathsf{SoftComm}}
\newcommand{\MS}{\mathsf{MercSetup}}
\newcommand{\open}{\mathsf{MercOpen}}
\newcommand{\tease}{\mathsf{Tease}}
\newcommand{\VO}{\mathsf{VerOpen}}
\newcommand{\VT}{\mathsf{VerTease}}
\newcommand{\PD}{\ZP_D}
\newcommand{\VD}{\ZV_D}
\newcommand{\n}{\mathsf{nil}}
\newcommand{\cS}{\mathsf{IntComSetup}}
\newcommand{\IC}{\mathsf{IntCom}}
\newcommand{\IO}{\mathsf{IntComOpen}}
\newcommand{\HomIntCom}{\mathsf{HomIntCom}}
\newcommand{\rank}{\mathsf{rank}}
\newcommand{\Ev}{\mathbb{E}}
\newcommand{\sICS}{\mathsf{SimICSetup}}
\newcommand{\sZS}{\mathsf{SimZKSSetup}}
\newcommand{\Pro}{\mathsf{Protocol}}
\newcommand{\zks}{\mathsf{ZKS}}
\newcommand{\Pone}{\mathsf{P}_1}
\newcommand{\Ptwo}{\mathsf{P}_2}
\newcommand{\x}{v}
\newtheorem{Definition}{Definition}[section]
\newtheorem{Theorem}{Theorem}[section]
\newenvironment{proof}[1][Proof]{\begin{trivlist}
\item[\hskip \labelsep {\bfseries #1}]}{\end{trivlist}}
\newcommand{\qed}{\nobreak \ifvmode \relax \else
      \ifdim\lastskip<1.5em \hskip-\lastskip
      \hskip1.5em plus0em minus0.5em \fi \nobreak
      \vrule height0.75em width0.5em depth0.25em\fi}
\newcommand\func[1][]{\mathit{f}\ifx\\#1\\\else(#1)\fi}
\begin{document}

\maketitle

\begin{abstract}

We introduce a formal model for order queries on lists in zero knowledge in the traditional authenticated data structure model.
We call this model Privacy-Preserving Authenticated List (PPAL).
In this model, the queries are performed on the list stored in the (untrusted) cloud where data integrity and privacy have to
be maintained. To realize an efficient authenticated data structure, we first adapt consistent data query model.
To this end we introduce a formal model called Zero-Knowledge List (ZKL) scheme which generalizes consistent membership queries in zero-knowledge 
to consistent membership and order queries on a totally ordered set in zero knowledge. We present a construction of ZKL based on zero-knowledge set
and homomorphic integer commitment scheme. Then we discuss why this construction is not as efficient as desired in cloud applications and
present an efficient construction of PPAL based on bilinear accumulators and bilinear maps which is provably secure and zero-knowledge.

\end{abstract}

\ifEprint
\else
\emph{Keywords:}
privacy-preserving authenticated data structure,
integrity,
bilinear accumulators,
bilinear aggregate signature,
redactable signatures,
cloud security.
\fi

\section{Introduction}
\label{sec:intro}

\ifEprint

Authentically releasing partial information while maintaining privacy is a well known requirement in many practical scenarios where the data being dealt with is sensitive. 

Consider, for example, 
the following medical case study presented in \cite{MedHR}.
Each patient has a personal health record (PHR) that contains the medication and vaccination history of the patient. Entries are made against the dates when medicines are taken and
vaccinations are performed.
Thus, the PHR is a chronologically sorted document signed by the medical provider and given
to the patient.
Now the patient might need to authorize the release of a subset of the PHR with only the relevant information to be sent to third parties on an as needed basis, without the medical provider's involvement.
For example, let us say the patient wants to join a summer camp that requires the vaccination record of the patient and the order in which the vaccinations were taken.
the patient wants to release the relevant information in a way such that the camp can verify that the data came from a legitimate medical provider, but 
the camp cannot learn anything beyond the authorized subset of relevant information, i.e., the vaccinations and their chronological order, but not the exact date when they were taken.

Consider another example where there are multiple regional sales divisions of a company distributed across three neighboring states.
A monthly sales report contains the number of products sold by each of the divisions, arranged in non-decreasing order. Each monthly sales report is signed by the authority and stored on a cloud server.
By the company's access control policy, each sales division is allowed to learn how it did in comparison to the other units, but not anything else. That is, a division cannot learn the sales numbers of other divisions or their relative performance beyond what it can infer by the comparisons with itself. 
Thus, the cloud would need to release the relevant information in such a way that the querying division can verify the data came from the legitimate source but not learn anything beyond the query result. 

The above examples motivate the following model: a trusted \emph{owner} generates an ordered set of elements. Let us call this ordered set a \emph{list}.
The owner outsources the list to a (possibly untrusted) party, let us call it \emph{server}. There is another party involved who issues order queries on the list, let us call this party \emph{client}.
The client only interacts with the server. So the server has to release information in a way such that the client can verify the authenticity of the data it receives, i.e., that it is truly generated by
the trusted owner. But the client should not be able to learn anything beyond the answers to its queries. 

This above model specifies an \emph{authenticated data structure} with an additional  privacy requirement.
An \emph{authenticated data structure} \cite{Tamassia03} is a structured collection of data (e.g., a list, tree, graph, or map) along with a set of query operations defined on it.
Three parties are involved in an authenticated data structure (ADS) scheme, namely, the data owner, the server and the client/user. ADS framework
allows the data owner to outsource data processing tasks to an untrusted
server without loss of data integrity for clients. This is achieved as follows.
The (trusted) data owner produces authentication information about the dataset (ordered list in our case) and a short digest signature and sends a copy of the dataset along with the authentication information to
the (untrusted) server and the digest signature to the client. The server responds to the (legitimate) client queries about the dataset by returning
the query answer and a compact proof of the answer. The client uses
the digest signature (obtained from the owner), the query answer and the proof obtained from the server to verify the integrity of the answer.

Classic hash-based authenticated data structures were designed without taking into account privacy goals and provide proofs that leak information about the dataset beyond the query answer. 
For example, in a hash tree \cite{Merkle80,Merkle89} for a set of $n$ elements, the proof of the membership of an element in the set has size $\log n$, thus leaking information about the size
of the set. Also, if the elements are stored at the leaves in sorted order, 
the proof of \emph{membership} of an element reveals its rank. Similar information leaks occur in other hash-based authenticated data structures for 
dictionaries and maps, such as authenticated skips lists~\cite{Goodrich01}.
As another example, consider an approach for supporting \emph{non-membership} proofs
using an authenticated data structure that supports membership proofs.
This method involves storing intervals of consecutive elements $(x_i, x_{i+1})$
and returning as a proof of non-membership of a query 
element $x$ the interval $(x_i, x_{i+1})$ such that $x_i < x < x_{i+1}$.
Hence, the proof trivially reveals two elements of the set. 

We define a \emph{privacy-preserving authenticated data structure} as
an ADS with an additional privacy property that  ensures that the proof returned by the server to the client does not reveal any information about 
the dataset beyond what can be learned from the current and previous answers to queries to the dataset.
In this paper we study one such data structure, a \emph{privacy-preserving authenticated list} (PPAL),
for which we consider order queries.

A privacy-preserving authenticated list (PPAL) allows an owner to outsource to the server data with different access control policies
imposed on it. Hence, when the owner outsources it to the server, clients can access only parts of it from the server and verify that it is indeed owner's
authentic data but should not be allowed to learn about the data they do not have permission to access.
Hence, privacy policy should be also imposed on the proofs of authenticity of the data that the clients learn
(the property not supported by classical ADS). PPAL has several interesting applications as we have already seen in the motivating examples.
We also envision a PPAL list to be an important building block for designing efficient hierarchical privacy-preserving data structures
e.g., ordered trees that store XML data.

In this paper, we present an efficient solution for privacy-preserving authenticated lists that supports
queries on the relative order of two (or more) elements of the list.
This framework guarantees integrity of the order queries through a compact proof returned to the client.
The proof does not reveal the actual ranks of the elements nor
any order information between elements other that what can be inferred from the current and previous answers by the rule of transitivity.

We first present a generic approach to this problem in the traditional consistent query model \cite{MicaliRK03,ChaseHLMR05,Ostrovsky04,Catalano:2008,Libert2010} where there are two parties involved: the prover, $\ZP=(\Pone,\Ptwo)$ 
and the verifier, $\ZV$. $\Pone$ takes an ordered list as input and produces a short commitment which is made public. Then $\ZV$ generates membership and order 
queries on the list and $\Ptwo$ responds with the answers and the proofs. Once $\ZP$ commits to a list, it cannot give answers inconsistent with the commitment that will pass the
verification test by $\ZV$. We formalize this framework as \emph{Zero Knowledge List (ZKL)} and give a construction in Section~\ref{sec:ZKL}.

It is easy to see this model can be interpreted in the PPAL framework as follows. We can make the owner run $\Pone$, the server run $\Ptwo$ and the client run
$\ZV$. In fact in the ZKL model we get stronger security guarantee as once the owner commits to a list, even a malicious one, 
it cannot give inconsistent answers later. Moreover the ZKL framework supports both membership and order queries in Zero Knowledge.
However, as we discuss in Section~\ref{sssec:ZKLEfficiency}, the construction is not very practical for cloud computing scenario where the client can be a mobile device.

The ZKL model indeed gives stronger security guarantees, but in the privacy-preserving authenticated list (PPAL) model, the owner is in fact a \emph{trusted} party.
We discuss the PPAL framework in Section~\ref{sec:scheme} and design an efficient PPAL scheme, exploiting the fact that the owner is an honest party. In our scheme, for a source list of size $n$ and a query of size $m$,
neither the server nor the owner runs in time more than $O(n)$ or requires storage more that $O(n)$. The running time for the server can be brought down to $O(m\log n)$ with $O(n)$
preprocessing time and the client requires time and space proportional to $O(m)$. We give the construction and discuss its efficiency in Section~\ref{sec:construction}.

\else

An \emph{authenticated data structure} \cite{Tamassia03} involves a structured dataset $\mathcal{D}$ (e.g., a list, tree, graph, or map) and three parties: the owner, the server and the client/user.  
The owner is the trusted source of dataset $\mathcal{D}$ that produces authentication information about~$\mathcal{D}$. The server maintains a copy of $\mathcal{D}$ along with the authentication information produced by the owner.  
The server responds to queries about $\mathcal{D}$ issued by the client by returning the query result together with a proof of the answer (also called answer authentication information).  
But the client trusts the owner and not the server. The client then verifies the authenticity of the answer obtained from the server using the proof produced by the server and the public key of the owner.

A repertory of query operations is defined over $\mathcal{D}$. The size of the collection $\mathcal{D}$ is the number of data objects present in the collection. We will denote the size of $\mathcal{D}$ by $|\mathcal{D}| = n$.

Classic hash-based authenticated data structures were designed without taking into account privacy goals and provide proofs that leak information about the dataset beyond the query answer. 
For example, in a hash tree \cite{Merkle80,Merkle89} for a set of $n$ elements, the proof of the membership of an element in the set has size $\log n$, thus leaking information about the size of the set.  
Also, if the elements are stored at the leaves in sorted order, the proof of an element reveals its rank. Similar information leaks occur in other hash-based authenticated data structures for dictionaries and maps, such as authenticated skips lists~\cite{Goodrich01}.

Other types of information leaks may occur when an authenticated data structure supporting membership proofs is adapted to support \emph{nonmembership} proofs. 
Indeed the simplest approach consists of storing in the data structure intervals of consecutive elements $(x_i, x_{i+1})$ according to some order and returning as a proof of nonmembership of query 
element $x$ the interval $(x_i, x_{i+1})$ such that $x_i < x < x_{i+1}$. In this approach, a nonmembership proof leaks two elements of the set.

In this paper, we study the design of \emph{privacy-preserving authenticated data structures}, which have the additional property that the proof returned by the server to the client does not reveal any information about 
the dataset beyond what can be inferred from the current and previous answers to queries. 
Specifically, we present a privacy-preserving authenticated data structure for lists that supports queries returning the relative order of two (or more) given elements of the list. 
The proof does not reveal the actual ranks of the elements. Also, it does not reveal any order information between elements other that what can be inferred form the current and previous answers by the rule of transitivity.

\paragraph{Applications}

A general application of a privacy-preserving authenticated list is for data that has different access control policies imposed on it by the owner.
Hence, when the owner outsources it to the server, different clients can access parts of it from the server and verify that it is indeed owner's
authentic data but should not be allowed to learn about the data they do not have permission to access.
Hence, privacy policy should be also imposed on the proofs of authenticity of the data that the clients learn.
To give a concrete example, consider the following scenario from \cite{MedHR}:
Medical providers give copies of signed personal health record (PHR) to a patient.
The patient would authorize construction of subset of the PHR with only the relevant information to be sent to third parties on an as needed basis, without the medical provider's involvement.
For example, vaccination information that is needed for school or summer camp enrollment can be provided without 
releasing an entire medical record, while still allowing the school/camp to verify that the data came from a legitimate medical provider.
But the third party, the school/camp in this example, should not be able to learn anything beyond the authorized subset of relevant information.
A few more concrete examples include SQL queries that contain
``order by'' clause on the data that is sensitive and
the client is allowed to learn only partial result of such query
and not how her data is positioned w.r.t. the rest of the data;
best seller lists that do not reveal relative ranking for items in different lists.

We also envision a privacy-preserving authenticated list to be an important building block for designing efficient hierarchical privacy-preserving data structures
e.g., ordered trees that store XML data.

\fi


\subsection{Problem Statement}
\label{sec:problem-statement}
In this section we define order queries, introduce our adversarial model and security properties, and discuss our efficiency goals.

\subsubsection{Query} 
\label{ss:querystructure}

Let $\List$ be a linearly ordered list of non-repeated elements.
An \emph{order query} on a list $\llist$ of distinct elements is defined as follows: 
given a pair of query elements $(x,y)$ of~$\List$, the server returns the pair with its elements 
rearranged according to their order in~$\List$ together with proofs of membership of $x$ and $y$
and a proof of the returned order. For example, if $y$ precedes $x$ in $\List$, then the pair $(y,x)$ is returned as an answer.

For generality, the data structure also supports \emph{batch order query}:
Given a list of query elements $\sublist$, the server returns a permutation of $\sublist$ according to the ordering of the elements in~$\List$, 
together with a proof of the membership of the elements and of their ordering in~$\List$.

The above model captures the query model of a privacy-preserving authenticated list.
In comparison, zero knowledge list structure supports the same queries as well as
non-membership queries. Hence, as a response to a (non-)membership
element query the prover returns a boolean value indicating if the element is in the list and a corresponding proof of (non-)membership.
              
\subsubsection{Adversarial model and security properties} 
In this section we present adversarial models and security properties of PPAL and ZKL.

Following the authenticated data structure model, list $\List$ plus authentication information about it is created by a data owner and given to a server, who answers queries on $\List$ issued by a client, who verifies the answers and proofs returned by the server using the public key of the data owner.
We assume the data owner is trusted by the client, who has the public key of the data owner. However, both the client and server can act as adversaries, as follows:
\begin{itemize}[noitemsep,nolistsep]
\item The server is malicious and may try to forge proofs for incorrect answers to (ordering) queries. 
For example, the server may try to prove an incorrect ordering of a pair of elements of~$\List$. 
 
\item The client tries to learn from the proofs additional information about list $\List$ beyond what it has inferred from the answers. 
For example, if the client has performed two ordering queries with answers $x<y$ and $x<z$, it may want to find out whether $y<z$ or~$z<y$.

 \end{itemize}
Note that in typical cloud database applications, the client is allowed to have only a restricted view of the data structure and 
the server enforces an access control policy that prevents the client from 
getting answers to unauthorized queries.  This motivates the curious behavior by the client.
The client may behave maliciously and try to ask ill-formed queries or queries violating the access control policy. But the server may just refuse to answer when the client asks illegal queries. 
So the client's legitimate behavior can be enforced by the server.

We wish to construct a privacy-preserving authenticated data structure for list $\List$, i.e., a data structure with the following security properties:

\begin{description} [noitemsep,nolistsep]
 \item [Completeness] ensures that honestly generated proofs are always accepted by the client.
 \item [Soundness] mandates that proofs forged by the server for incorrect answers to queries do not pass the verification performed by the client.
 \item [Zero Knowledge] means that each proof received by the client to a query reveals and verifies the answer and nothing else.
In other words, for any element $x_i \in \{0,1\}^\ast$, the simulator, given oracle access to $\llist$, should be able to simulate proofs for order queries that are indistinguishable from real proofs.
\end{description}
To understand the strength of the zero-knowledge property, let us illustrate to what extent the proofs are non-revealing.
One of the guarantees of this property is that receiving a response to a query~$\sublist$
does not reveal where in~$\llist$ queried elements of~$\sublist$ are.
In other words, no information about $\llist$, other than
what is queried for in~$\sublist$ is revealed. It is worth noting that in the context of leakage-free redactable signature schemes, this property has been
referred to as \emph{transparency}~\cite{Brzuska10,Samelin12} and \emph{privacy}~\cite{Chang09,Kundu12}.
Moreover, zero knowledge also provides security for the size of the list $\llist$ from the client.

Since we let the client ask multiple queries on a static list adaptively,
in principle, it is possible that even though the individual query responses and proofs
do not leak any extraneous information about the source list, when the responses and proofs are collected together, the client is able to infer some structural information about the source
list, which it had not explicitly queried for.
Hence, we need to ensure that the scheme is immune against any potential leakage
of any structural information
that has not been explicitly asked for by the client.
More concretely,
in a linearly ordered list~$\llist$,
the client should not be able to infer any relative order that is not inferable by the rule of transitivity from the queried orders.
This security guarantee also follows from the zero-knowledge property.

The adversarial model in ZKL is different from that of PPAL.
The ZKL model considers only two parties: the prover and the verifier.
The prover initially computes a commitment to a list~$\llist$ and makes this commitment public (i.e., the verifier also receives it).
Later the verifier asks membership and order queries on the list and the prover responds accordingly. In ZKL both the prover and the verifier can be malicious
as follows:
\begin{itemize}[noitemsep,nolistsep]
 \item The prover may try to give answers which are inconsistent with the initial commitment.
 \item The adversarial behavior of the verifier is the same as that of the client in the PPAL model.
\end{itemize}
The security properties of ZKL (Completeness, Soundness, Zero-Knowledge) guarantee security against malicious prover and verifier.
They are close to the ones of PPAL except for Soundness which captures that the prover can try to create a forgery on a list of his choice.
We discuss the security properties of ZKL in more detail in Section~\ref{ssec:ZKLSec}.

\subsubsection{Efficiency} 

We characterize the efficiency of a privacy-preserving authenticated data structure for a list, $\List$, of~$n$~items by means of the following space and time complexity measures:

\begin{itemize}[noitemsep,nolistsep]
\item \emph{Server storage:} Space at the server for storing list $\List$ and the authentication information and for processing queries. 
Ideally, the server storage is~$O(n)$, irrespective of the number of queries answered.
\item \emph{Proof size:} Size of the proof returned by the server to the client. Ideally, the proof has size proportional to the answer size.
\item \emph{Setup time:} Work performed by the data owner to create the authentication information that is sent to the server. Ideally, this should be~$O(n)$.
\item \emph{Query time:} Work performed by the server to answer a query and produce its proof. Ideally, this work is proportional to the answer size.
\item \emph{Verification time:} Work performed by the client to verify the answer to a query using the proof provided by the server and the public key of the data owner. Ideally, this work is proportional to the answer size.
\end{itemize}

\subsection{Related Work}
\label{sec:related}

We discuss related literature in three sections. First, we discuss
work on data structures that answer queries in zero knowledge.
This work is the closest to our work on zero knowledge lists.
We then discuss signature schemes that can be interpreted in the privacy-preserving authenticated data structure model.
Finally, we highlight the body of literature regarding leakage-free redactable signature schemes for ordered lists in detail.
The latter is the closest to the problem of privacy preserving authenticated lists that we are addressing in this manuscript. 

\paragraph{Zero Knowledge Data Structures}

Buldas~\textit{et al.}~\cite{Buldas:2002}
showed how to prove answers to dictionary queries using
an authenticated search tree-based construction, but did not consider privacy.
For a set of size $n$, the construction produces a proof of \mbox{(non)m}embership
of an element in the set that has size~$O(k\log n)$. Similar to other work on authenticated data structures~\cite{Merkle80,Merkle89,Goodrich01},
the proof reveals information about the underlying set, e.g., its size and the location
of the queried entry w.r.t.~other entries.

The model of a \emph{zero knowledge set} (more generally, \emph{zero knowledge elementary database}) was first introduced by Micali~\textit{et al.}~\cite{MicaliRK03}.
This is a secure data structure which allows a prover to commit to a \emph{finite set}
$S$ in such a way that, later on, it will be able to efficiently (and non-interactively)
prove statements of the form $x \in S$ or $x \notin S$ without leaking any information about $S$ beyond what has been queried for, not even the size of~$S$. 
The security properties  guarantee that the prover should not be able to cheat and prove contradictory statements about an element.
Later, Chase~\textit{et al.}~\cite{ChaseHLMR05} abstracted away Micali~\textit{et al.}'s solution and described the exact properties a commitment scheme should possess in order to allow a similar construction. 
This work introduced a new commitment scheme, called \emph{mercurial commitment}.
A generalization of mercurial commitments allowing for committing to an ordered sequence of messages ($q$-trapdoor mercurial commitment) was proposed in \cite{Catalano:2008} and 
later improved in \cite{Libert2010}. A~$q$-trapdoor mercurial commitment allows a committer to commit to an ordered sequence of message and 
later open messages with respect to specific positions.

Th above zero knowledge set constructions \cite{MicaliRK03,ChaseHLMR05,Catalano:2008,Libert2010} use an implicit
ordered $q$-way hash tree ($q \geq 2$) built
on the universe of all possible elements. The size of this tree is exponential in the security parameter. However, only a portion of the tree of size polynomial in the security parameter is explicitly stored in the data structure.
Let $N$ denote the universe size. Then the proof size for membership and non-membership for an individual element is $O(\log_q N)$. Kate~\textit{et al.}~\cite{KateZG10} suggested a weaker primitive called \emph{nearly-zero knowledge set} based on \emph{polynomial
commitment}~\cite{KateZG10}. In their construction the proof size for membership and non-membership for every individual element is $O(1)$, but the set size is not private. 

A related notion of \emph{vector commitments} was introduced by~\cite{CatalanoF13} where they show that
a (concise) $q$-trapdoor mercurial commitment can 
be obtained from a vector commitment and a trapdoor mercurial commitment.
A vector commitment scheme allows a committer
to commit to an ordered sequence of values $(x_1, \ldots, x_n)$ in such a way
that the committer can later open the commitment at specific positions (e.g., prove that $x_i$ is the $i$-th committed message).

Ostrovsky~\textit{et al.}~\cite{Ostrovsky04} generalized the idea of membership queries to support membership and orthogonal
range queries on a multidimensional dataset.~\cite{Ostrovsky04} describe constructions for consistent database
queries, which allow the prover to commit to a database, and then provide query answers that are provably consistent with the commitment.
They also consider the problem of adding privacy to such protocols. However their construction requires interaction
(which can be avoided in the random oracle model) and requires the prover to keep a counter for the questions asked so far.
The use of NP-reductions and probabilistically checkable proofs makes their generic construction expensive.
The authors of~\cite{Ostrovsky04} also provide a simpler protocol based
on \emph{explicit-hash Merkle Tree}.
However, this construction does not hide the size of the
database as the proof size is $O(\lceil \log n \rceil)$
where $n$ is the upper bound on the size of the database.

\paragraph{Signature Schemes}
 A collection of signature schemes, namely \emph{content extraction signature} \cite{Steinfeld01}, \emph{redactable signature} \cite{Johnson02} and 
\emph{digital document sanitizing scheme} \cite{Miyazaki06}
can be viewed in a three-party model where the owner digitally signs a data document
and the server discloses to the client only part of the signed document
along with a legitimately derived signature on it. The server derives the signature without the owner's involvement and
the client verifies the authenticity of the document it receives from the server by running the verification algorithm of the underlying scheme.
A related concept is that of \emph{transitive signature scheme}, where given the signatures of two edges $(a,b)$ and $(b,c)$
of a graph, it is possible to compute the signature for the edge (or path) $(a,c)$ without the signer's secret key \cite{MR2002,Yi:2007:DTS:2174147.2174160,CH2012}.
However, these signature schemes are not designed to preserve privacy of the signed object, which may include the content and/or the structure
in which the content is stored.

Ahn \textit{et al.} \cite{quoting} present a unified framework for computing on authenticated data via the notion of slightly homomorphic or $P$-homomorphic signatures, which was later improved by \cite{Wang12}. 
This broad class of $P$-homomorphic signatures includes \emph{quotable, arithmetic, redactable, homomorphic, sanitizable and transitive signatures}.
This framework allows a third party to derive a signature on the object $x^{\prime}$ from a signature on $x$ as long as $P(x,x^{\prime}) = 1$ for some predicate~$P$ that captures the \emph{authenticatable relationship} between $x$ and $x^{\prime}$. 
A derived signature reveals no extra information about the parent~$x$,
referred to as \emph{strong context hiding}.
This work does not consider predicates of a specific data structure.

The authors propose a general RSA-accumulator based scheme
that is expensive in terms of computation.
In particular,~the cost of signing depends on the predicate $P$ and the size of the message space and is $O(n^2)$ for a $n$-symbol message space.
This privacy definition was recently refined by \cite{AttrapadungLP12}.
This line of work cannot be directly used for privacy preserving data structures
where efficiency is an important requirement and quadratic
overhead may be prohibitive depending on the application.

\cite{ChaseKLM13} gives definition and construction of malleable signature scheme. 
A signature scheme is defined to be malleable if, given a signature $\sigma$ on a message
$x$, it is possible to efficiently derive a signature $\sigma'$ on a message $x'$ such that
$x' = T(x)$ for an \emph{allowable} transformation $T$.
Their definition of context hiding requires unlinkability and allows for adversarially-generated keys and signatures. This definition is stronger than that of \cite{quoting}
as it allows for adversarially-generated keys and signatures. 
Unlinkability implies the following: a quoted (or derived) signature should be indistinguishable from a fresh signature.

A motivating example proposed in \cite{quoting} deals with the impossibility of linking
a quote to its source document.
However, in the framework of privacy preserving authenticated data structures,
it is important for the client to verify membership, i.e., given
a quote from a document and a signature on the quote,
the client should be able to verify that the quote is indeed in the document. Context-hiding definition in \cite{ChaseKLM13} also requires unlinkability.

\paragraph{Leakage-Free Signature Schemes for Ordered Lists}

A \emph{leakage-free redactable} \emph{signature scheme (LRSS)} allows a third party to remove parts of a
signed document without invalidating its signature. This action, called \emph{redaction}, does not require
the signer's involvement. As a result, the verifier
only sees the remaining redacted document and is able to verify that
it is valid and authentic. Moreover, the redacted document and its signature do
not reveal anything about the content  or position of the removed parts.
This problem can be easily interpreted in the privacy-preserving authenticated data structure model, where the signer is the owner,
the third party is the server and the verifier is the client.

Kundu and Bertino \cite{KunduB08} were first to introduce
the idea of structural signatures for ordered trees (subsuming ordered lists) which support public redaction
of subtrees (by third-parties) while retaining the integrity of the remaining parts. This was later extended to DAGs and graphs \cite{KunduB13}.
The notion was later formalized as \emph{LRSS} for ordered trees
in \cite{Brzuska10} and subsequently several attacks on \cite{KunduB08} were also proposed in \cite{Brzuska10, Poehls12}.

The authors of~\cite{Chang09} presented a
leakage-free redactable signature scheme for strings (which can be viewed as an ordered list)
that hides the location of the redacted or deleted portions of the list
at the expense of quadratic verification cost.

The basic idea of the LRSS scheme presented in \cite{Brzuska10} is to
sign \emph{all possible ordered pairs} of elements of an ordered list.
So both the computation cost and the storage space are quadratic in the number of elements of the list.
Building on the work of~\cite{Brzuska10}, \cite{Samelin12} proposed an LRSS for lists
that has quadratic time and space complexity.
Poehls~\textit{et al.}~\cite{Poehls12} 
presented a LRSS scheme for a list that has linear time and space complexity
but assumes an associative non-abelian hash function, whose existence has not been formally proved. 
The authors of~\cite{Kundu12},  presented a construction that uses quadratic space at the server and is not 
leakage-free. We discuss the attack in Section~\ref{sec:scheme}.


\subsection{Contributions and Organization of the Paper}

The main contributions of this work are as follows:
\begin{itemize}[noitemsep,nolistsep]

\item After reviewing preliminary concepts and the cryptographic primitives we use in this paper,
in Section~\ref{sec:prelim}, we introduce the Zero-Knowledge List (ZKL) model, present a construction, prove its security and analyze its efficiency in Section~\ref{sec:ZKL}. 
\item In Section~\ref{sec:scheme}, we introduce a formal model for a privacy-preserving authenticated list that supports order queries on its elements.
\item In Section~\ref{sec:construction}, we present a construction of the above data structure based on bilinear maps and we analyze its performance.
\item Formal proofs for the security properties of our construction are given in Section~\ref{sec: sec-proofs}.

\end{itemize}

In Table~\ref{tab:comparison} we compare our constructions of a privacy-preserving authenticated list with previous work in terms of performance, and assumptions. 
We also indicate which constructions satisfy the zero-knowledge property.
We include a construction based on our new primitive, ZKL, and our direct construction of PPAL. We note that ZKL model is a two party model
but can be adapted to a three party model of PPAL (see Section~\ref{sec:construction} for details).
Our PPAL construction outperforms all previous work that is based on widely accepted assumptions \cite{Brzuska10,Samelin12}.

\begin{table}[t!]

  \scriptsize


  \begin{tabularx}{6.5in}{p{0.843in}>{\raggedright\arraybackslash} m{1.4cm} m{1.1cm} m{1cm} m{1cm} m{1cm} m{1.2cm} m{1cm}|  m{0.74cm} m{0.8cm}}
    &  &   &    &  & &  &  &  \multicolumn{2}{l}{This paper}               \\
    & \cite{Steinfeld01} & \cite{Johnson02}         & \cite{Chang09}       & \cite{Brzuska10}                & \cite{Samelin12} & \cite{Poehls12}                       & \cite{Kundu12} &  ZKL & \ppadsl              \\
    \hline
    Zero-knowledge     &                    &                          &                      & $\checkmark$                    & $~\checkmark$     & $~\checkmark$                          &      & $\checkmark$           & $\checkmark$                \\ 
    \hline
    Setup time        & $n \log n $        & $~n$                      & $~n$                  & $n^2$                           & $~n^2$            & $~n$                                   & $~n$         & $n \log N$       & $n$                         \\ 
    \hline
    Server Space             & $n$                & $~n$                      & $~n$                  & $n^2$                           & $~n^2$            & $~n$                                   & $~n^2$       & $n \log N$   & $n$                         \\ 
    \hline
    Query time        & $m$                & $~n \log n$           & $~n$                  & $mn$                            & $~m$              & $~n$                                 & $~n$  & $m\log N$& $\min(m\log n,n)$           \\
    \hline
    Verification time & $m\log n \log m$   & $~m\log n$                & $~n^2$             & $m^2$                           & $~m^2$            & $~m$                                   & $~m$    & $m \log N$        & $m$                         \\
    \hline
    
    Proof size       & $m$                & $~m\log n$                 & $~n$                  & $m^2$                           & $~m^2$            & $~m$                                   & $~n$        & $m \log N$    &  $m$                         \\
    \hline
    Assumption        & RSA                & ~RSA                      &  SRSA, Division & EUCMA  &ROH, nEAE          & AnAHF  & ROH, RSA  & ROH, FC, SRSA     & ~ROH,nBDHI \\   
    \hline
  \end{tabularx}
  
\small

\vspace{-10pt} 

\caption{\label{tab:comparison}%
  \small
  Comparison of our constructions of a privacy-preserving authenticated list with previous work.
  ZKL is a construction based on Zero-Knowledge lists from Section~\ref{sec:ZKLC}
  and \ppadsl~is a direct PPAL construction from Section~\ref{sec:construction} . All the time and space  complexities are asymptotic. Notation: 
  $n$ is the number of elements of the list, $m$ is the number of elements in the query, and 
  $N$ is the number of all possible $l$-bit strings from where list elements can be drawn from.
  Acronyms for the assumptions: 
  Strong RSA Assumption (SRSA);
  Existential Unforgeability under Chosen Message Attack (EUCMA) of the underlying signature scheme; 
   Random Oracle Hypothesis (ROH);
   $n$-Element Aggregate Extraction Assumption (nEAE);
  Associative non-abelian hash function (AnAHF);
   Factoring a composite (FC);
   $n$-Bilinear Diffie Hellman Inversion Assumption(nBDHI).
 }
\end{table}

\section{Preliminaries}
\label{sec:prelim}

\subsection{Data Type}

We consider a \emph{linearly ordered list}~$\llist$ as a data structure
that the owner wishes to store with the server. A list is an ordered set of elements 
$\llist = \{ x_1,x_2,\ldots,x_n \}$, where each $x_i \in \{ 0,1 \}^\ast$, $\forall x_1, x_2 \in \llist, x_1 \neq x_2$
and either $x_1 < x_2$ or $x_2 < x_1$.
Hence,~$<$~is a strict order on elements of $\llist$ that is irreflexive, asymmetric and transitive. 

We denote the set of
elements of the list \li~as \ele(\li).
A~sublist of $\llist$,
$\sublist$, is defined as: $\sublist = \lbrace x~|~x \in \elem(\llist)  \rbrace$.
Note that the order of elements in~$\sublist$ may not follow the order of~$\llist$.
We denote with \perm{\mathcal{L}}{\sublist} the permutation of the elements of $\sublist$ under the order of~\li.

$\llist(x_i)$ denotes the membership of element $x_i$ in $\llist$, i.e., $\llist(x_i) \neq \bot$ if $x_i \in \llist$ and $\llist(x_i) = \bot$ if $x_i \notin \llist$.
We interpret $\llist(x_i)$ as a boolean value, i.e., $\llist(x_i) \neq \bot$ is equivalent to $\llist(x_i) = \mathsf{true}$ and $\llist(x_i) = \bot$
is equivalent to $\llist(x_i) = \mathsf{false}$.
For all $x_i$ such that $\llist(x_i) \neq \bot$, $\mathsf{rank}(\llist, x_i)$ denotes the rank of element $x_i$ in the list, $\llist$.

\subsection{Cryptographic Primitives}

We now describe a signature scheme that is used in our construction
and cryptographic assumptions that underly the security of our method.
In particular, our zero knowledge list construction relies on
homomorphic integer commitments (Section~\ref{sssec:IntCom}),
zero knowledge protocol to prove a number is non-negative (Section~\ref{sssec:ZKP}) and zero
knowledge sets (Section~\ref{sssec:ZKS}),
while the construction for privacy preserving lists relies on bilinear aggregate signatures
and $n$-Bilinear Diffie Hellman Inversion assumption (Section~\ref{sssec:nbdhi}).

\subsubsection{Homomorphic Integer Commitment Scheme}
\label{sssec:IntCom}

We use a homomorphic integer commitment scheme $\HomIntCom$
that is statistically hiding and computationally binding~\cite{Boudot00, DamgardF02}.
The later implies the existence of a trapdoor and, hence, can be used to
``equivocate'' a commitment, that is open the original message of the commitment
to another message.
The above commitment scheme is defined in terms of three algorithms
$\HomIntCom = \{\cS,$ $\IC,$ $\IO\}$ and the corresponding trapdoor commitment (we call it a simulator) as:
$\Sim\HomIntCom = \{\Sim\cS,$ $\Sim\IC,$ $\Sim\IO\}$.
We describe these algorithms in Figure~\ref{fig:IntCom}.
The \emph{homomorphism} of $\HomIntCom$  is defined as 
$\IC(x+y) =  \IC(x) \times \IC(y)$.
For specific constructions of $\HomIntCom$ see Figure~\ref{fig:IC} in Appendix.

\begin{figure}[h]
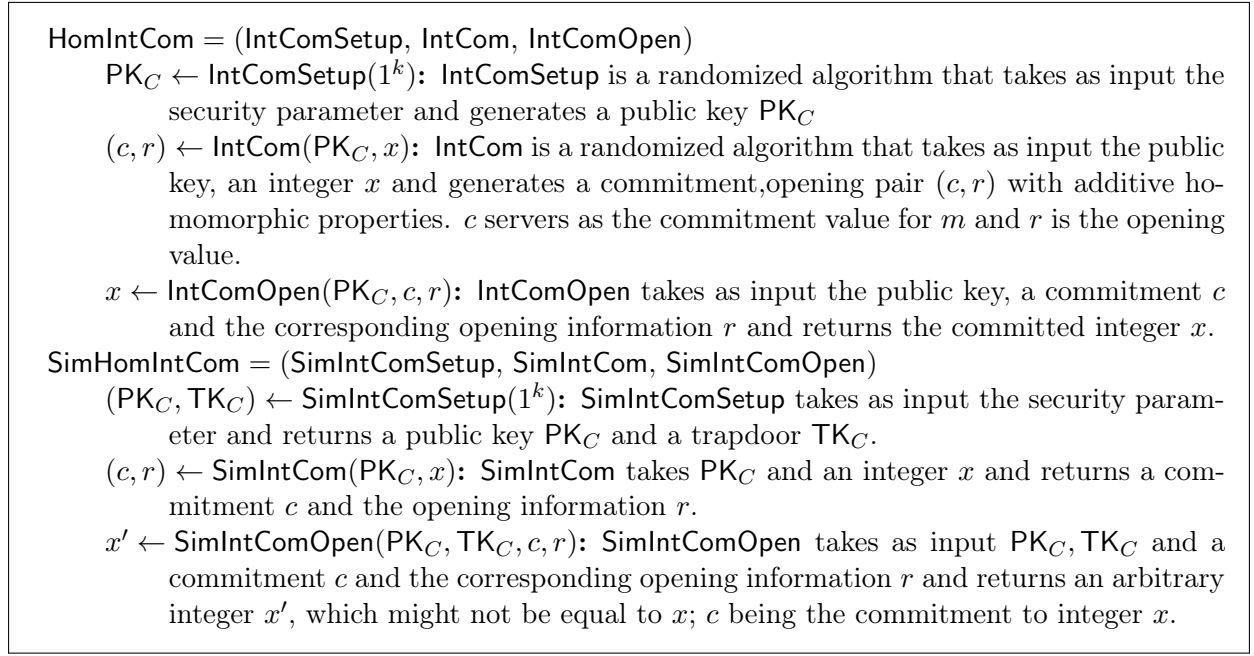

\caption{Homomorphic Integer Commitment Protocols.}
\label{fig:homintcom}
\begin{framed}
\begin{description}[noitemsep,nolistsep]
\item $\HomIntCom = (\cS,$ $\IC,$ $\IO)$ 
\begin{description}[noitemsep,nolistsep]
 \item [$\PK_C \leftarrow \cS(1^k)$:] 
$\cS$ is a randomized algorithm that takes as input the security parameter and generates a public key $\PK_C$
 \item[$(c,r) \leftarrow \IC(\PK_C,x)$:]
$\IC$ is a randomized algorithm that takes as input the public key, an integer~$x$ and generates a commitment,opening pair~$(c,r)$ with
additive homomorphic properties. $c$ servers as the commitment value for $m$ and $r$ is the opening value.
 \item [$x \leftarrow \IO(\PK_C,c,r)$:]
$\IO$ takes as input the public key, a commitment~$c$ and the corresponding opening information~$r$ and returns the committed integer~$x$.
\end{description}
\item $\Sim\HomIntCom = (\Sim\cS,$ $\Sim\IC,$ $\Sim\IO)$
\begin{description}[noitemsep,nolistsep]
 \item [$(\PK_C, \TK_C) \leftarrow \Sim\cS(1^k)$:] 
 $\Sim\cS$ takes as input the security parameter and returns a public key $\PK_C$ and a trapdoor $\TK_C$.

 \item[$(c,r) \leftarrow \Sim\IC(\PK_C,x)$:]
$\Sim\IC$ takes $\PK_C$ and an integer $x$ and returns a commitment $c$ and the opening information $r$.
 \item [$x' \leftarrow \Sim\IO(\PK_C,\TK_C,c,r)$:]
$\Sim\IO$ takes as input $\PK_C, \TK_C$ and a commitment $c$ and the corresponding opening information $r$ and returns an arbitrary integer $x'$, which
might not be equal to $x$; $c$ being the commitment to integer $x$.
\end{description}
\end{description}

\end{framed}
\label{fig:IntCom}
\end{figure}

\subsubsection{Proving an integer is positive in zero-knowledge}
\label{sssec:ZKP}
We use following protocol between a prover and a verifier:
the verifier holds prover's commitment~$c$ to an integer~$x$ and wishes to verify if this
integer is positive, $x > 0$,
without opening~$c$. We denote this protocol as $\Pro(x,r : c = C(x;r) \wedge x>0)$ (Figure~\ref{fig:ZKP}).
In our construction, we will use the commitment scheme~$\HomIntCom$ described in Figure~\ref{fig:IntCom} and
use $\IC$ to compute $c$.

\begin{figure}[h]
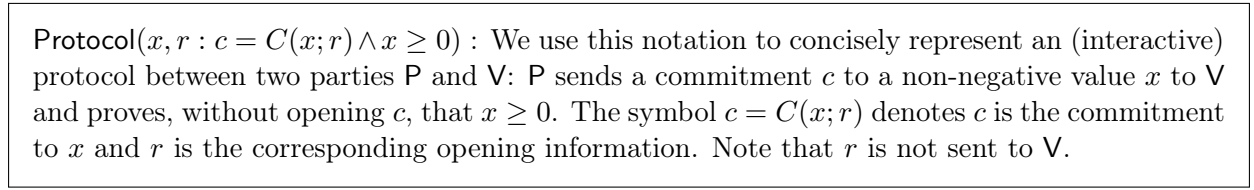

\caption{Protocol to prove non-negativity of an integer}
 \begin{framed}
$\Pro(x,r :c = C(x;r) \wedge x \geq 0)$ : We use this notation to concisely represent an (interactive) protocol between two parties $\mathsf{P}$ and $\mathsf{V}$:
$\mathsf{P}$ sends a commitment $c$ to a non-negative value $x$ to $\mathsf{V}$ and proves, without opening $c$, that $x \geq 0$.
The symbol $c = C(x;r)$ denotes $c$ is the commitment to $x$ and $r$ is the corresponding opening information. Note that $r$ is not sent to $\mathsf{V}$.
 \end{framed}
\label{fig:ZKP}
\end{figure}

As a concrete construction we extend
the protocol of~\cite{Lipmaa03} which allows one to prove that
$x \ge 0$ to supply a prove that $x-1 \ge 0$. This proves $x>0$.
The protocol of~\cite{Lipmaa03} is a $\Sigma$ protocol, which is \emph{honest verifier zero knowledge} and can be made 
\emph{non-interactive general zero knowledge}
in the Random Oracle model using Fiat-Shamir heuristic \cite{FiatS86}.
For details of the protocol refer to Figure~\ref{fig:Lipmaa}.

\subsubsection{Zero Knowledge Set scheme}
\label{sssec:ZKS}
Let $D$ be a set of of key value pairs. If $(x,v)$ is a key, value pair of $D$, i.e, $(x,v) \in D$,
then we write $D(x) = v$ to denote $v$ is the value corresponding to the key $x$. For the keys that are not present in~$D$,
$x \notin D$, we write $D(x) = \bot$.
A Zero Knowledge Set scheme (ZKS) consists of three probabilistic polynomial time algorithms - $\zks = (\zks\setupM,\zks\ZP = (\zks \Pone, \zks \Ptwo),\zks\ZV)$ and queries are of the form ``is key $x$ in $D$?''.
We describe the algorithms in Figure~\ref{fig:ZKSModel}.

\begin{figure}
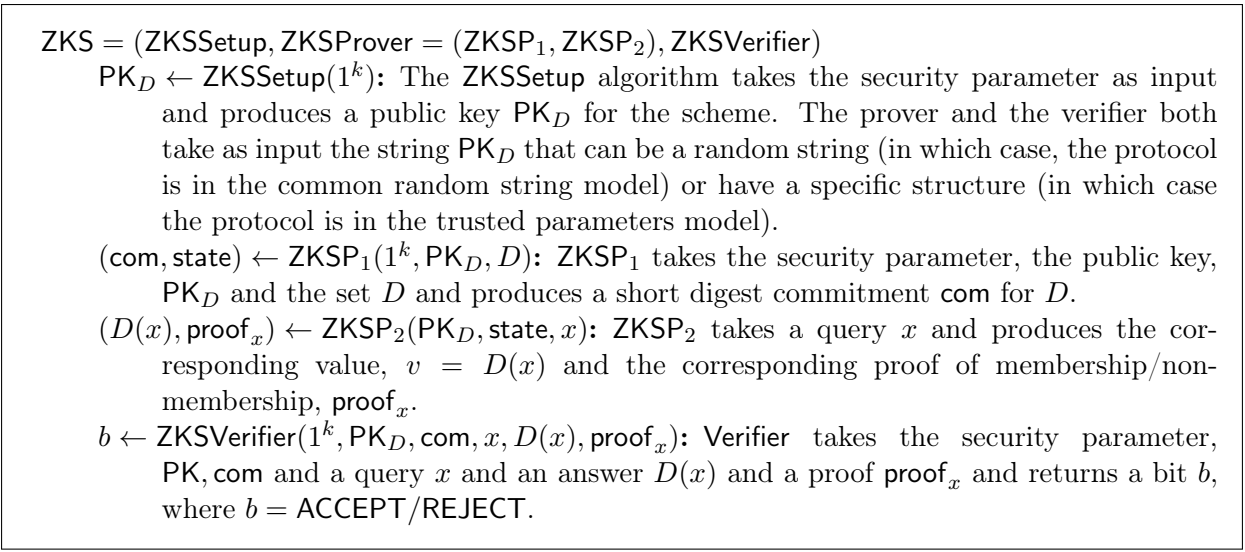

\caption{Zero Knowledge Set (ZKS) model}
\begin{framed}
\begin{description}[nolistsep,noitemsep]
 \item $\zks = (\zks\setupM,\zks\ZP = (\zks \Pone, \zks \Ptwo),\zks\ZV)$

\begin{description}[noitemsep,nolistsep]
 \item [$\PK_D \leftarrow \zks\setupM(1^k)$:] 
 
 The $\zks\setupM$ algorithm takes the security parameter as input and produces a public key $\PK_D$ for the scheme.
The prover and the verifier both take as input the string $\PK_D$ that can be a random
string (in which case, the protocol is in the common random string model) or
have a specific structure (in which case the protocol is in the trusted parameters model).

 \item[$(\com,\st) \leftarrow \zks \Pone(1^k, \PK_D, D)$:]
 
 $\zks \Pone$ takes the security parameter, the public key, $\PK_D$ and the set $D$ and produces a short digest commitment $\com$ for $D$.
 
 \item [$(D(x),\pr_x) \leftarrow \zks \Ptwo(\PK_D,\st, x)$:]
 
 $\zks \Ptwo$ takes a query $x$ and produces the corresponding value, $v = D(x)$ and the corresponding proof of membership/non-membership, $\pr_x$.

 \item [$b \leftarrow \zks \ZV(1^k, \PK_D, \com,x,D(x),\pr_x)$:]
 
 $\ZV$ takes the security parameter, $\PK,\com$ and a query $x$ and an answer $D(x)$ and a proof $\pr_x$ and returns a bit $b$, where $b= \mathsf{ACCEPT}/\mathsf{REJECT}$.
\end{description}
\end{description}
\end{framed}
\label{fig:ZKSModel}
\end{figure}

For our construction of zero knowledge lists we pick a ZKS construction of~\cite{ChaseHLMR05} that
is based on mercurial commitments and describe it in more details
in Figure~\ref{fig:ZKS}.

\subsubsection{Bilinear Aggregate Signature Scheme}
\label{sec:agg-sign}

We use bilinear aggregate signature scheme developed by Boneh~\textit{et al.}~\cite{boneh2003aggregate} for our privacy preserving authenticated data structure scheme. 
Given $n$ signatures on $n$ \emph{distinct} messages $M_1,M_2, \ldots, M_n$ from $n$ distinct users $u_1,u_2, \ldots, u_n$, it is possible to aggregate all these signatures into a single short signature such that 
 the single signature (and the $n$ messages) will convince the verifier that the $n$ users indeed signed the $n$ original messages (i.e., user $i$ signed message $M_i$ for $i = 1, \ldots, n$). 
Here we describe the scheme for the case of a single user signing $n$ \emph{distinct} messages $M_1,M_2, \ldots, M_n$. The decryption of the generic case of $n$ different users can be found at~\cite{boneh2003aggregate}.
The following notation is used in the scheme:

\begin{itemize}[noitemsep,nolistsep]
 \item $G,G_1$ are multiplicative cyclic groups of prime order $p$
 \item $g$ is a generator of $G$
 \item $e$ is computable bilinear nondegenerate map $e: G \times G \rightarrow G_1$
 \item $H: \lbrace 0,1 \rbrace^{\ast} \rightarrow G$ is a full domain hash function viewed as a random oracle that can be instantiated with a cryptographic hash function. 
\end{itemize}

Formally, a bilinear aggregate signature scheme is a 5 tuple of algorithm \emph{Key Generation, Signing, Verification, Aggregation}, 
and \emph{Aggregate Verification}. We discuss the construction in Figure~\ref{fig:BLGS}.

\begin{figure}
\caption{Bilinear Aggregate Signature Scheme}
\begin{framed}

\begin{description}[noitemsep,nolistsep]
 \item \textbf{Key Generation}: The secret key~$v$ is a random element of $\ZZ_p$ and the public key~$x$ is set to~$g^v$.
 
 \item \textbf{Signing}: The user signs the hash of each \emph{distinct message} $M_i \in \lbrace 0,1 \rbrace^{\ast}$ via $\sigma_i \leftarrow H(M_i)^v$.
 
 \item \textbf{Verification}: Given the user's public key $x$, a message $M_i$ and its signature~$\sigma_i$, accept if $e(\sigma_i,g) = e(H(M_i),x)$ holds.
 
 \item \textbf{Aggregation}: This is a public algorithm which does not need the user's secret key to aggregate the individual signatures. 
 Let $\sigma_i$ be the signature on a distinct message $M_i \in \lbrace 0,1 \rbrace^{\ast}$ by the user, according to the Signing algorithm ($ i = 1, \ldots, n$).
 The aggregate signature~$\sigma$ for a subset of $k$ signatures, where $k \leq n$, is produced via~$\sigma \leftarrow \prod_{i=1}^k \sigma_i$.
 
 \item \textbf{Aggregate Verification}: Given the aggregate signature $\sigma$, $k$ original messages $M_1,M_2, \ldots, M_k$ and the public key $x$:
        \begin{enumerate}[noitemsep,nolistsep]
         \item ensure that all messages $M_i$ are distinct, and reject otherwise.
         \item accept if $e(\sigma,g) = e(\prod_{i=1}^kH(M_i),x)$.
        \end{enumerate}

\end{description}

\end{framed}
\label{fig:BLGS}
\end{figure}

\paragraph{Security}
Informally, the security requirement of an aggregate signature scheme
guarantees that the aggregate signature~$\sigma$ is valid if and only if the aggregator
used all $\sigma_i$'s, for $1 \leq i \leq k$, to construct it. The formal model of security is called
the aggregate chosen-key security model. The security of aggregate signature schemes is
expressed via a game where an adversary is challenged to
forge an aggregate signature:

\begin{description}[noitemsep,nolistsep]
 \item \textbf{Setup}: The adversary $\mathcal{A}$ is provided with a public key $\mathsf{PK}$ of
 the aggregate signature scheme.
 
 \item \textbf{Query}: $\mathcal{A}$ adaptively requests signatures on messages of his choice.
 
 \item \textbf{Response}: Finally, $\mathcal{A}$ outputs $k$ distinct messages $M_1,M_2, \ldots, M_k$ and an aggregate signature~$\sigma$. 
\end{description}
$\mathcal{A}$ wins if the aggregate signature $\sigma$ is a valid aggregate signature on messages $M_1,M_2, \ldots, M_k$
under~$\mathsf{PK}$, and $\sigma$ is nontrivial, i.e., $\mathcal{A}$ did not request a signature on $M_1,M_2, \ldots, M_k$
under $\mathsf{PK}$. A formal definition and a corresponding security proof
of the scheme can be found in \cite{boneh2003aggregate}.


\subsection{Hardness assumption}
\label{sssec:nbdhi}

Let~$p$ be a large $k$-bit prime where
$k \in \mathbb{N}$ is a security parameter. Let $n \in \mathbb{N}$ be polynomial in $k$, $n = \p$.
Let $e:G \times G \rightarrow G_1$ be a bilinear map where $G$ and~$G_1$ are groups of prime order $p$
and $g$ be a random generator of $G$. We denote a probabilistic polynomial time~(PPT)
adversary~$\mathcal{A}$, or sometimes $\mathcal{B}$,  as an adversary
who is running in time $\mathsf{{poly}(k)}$.
We use $\mathcal{A}^{\mathsf{alg}(\mathsf{input}, \ldots)}$ to show that
an adversary $\mathcal{A}$ has an oracle access to
an instantiation of an algorithm~$\mathsf{alg}$ with
first argument set to~$\mathsf{input}$ and $\ldots$
denoting that $\mathcal{A}$ can give arbitrary input for the
rest of the arguments.

\begin{Definition}[$n$-Bilinear Diffie Hellman Inversion ($n$-BDHI)~\cite{Boneh04efficientselective-id}]\label{def:nbdhi}
Let $s$ be a random element of $\mathbb{Z}_p^\ast$ and $n$ be a positive integer.
Then, for every PPT adversary~$\mathcal{A}$ there exists a negligible function $\nu(.)$ such that:
$Pr[s \xleftarrow{\$}  \mathbb{Z}_p^\ast, y \leftarrow G_1:\mathcal{A}(\langle g, g^s,g^{s^2}, \ldots, g^{s^n} \rangle): y  = e(g,g)^{\frac{1}{s}}] \leq \nu(k).$
\end{Definition}


\section{Zero Knowledge List (ZKL)}
\label{sec:ZKL}

We generalize the idea of consistent set membership queries~\cite{MicaliRK03,ChaseHLMR05} to support membership and order queries in \emph{zero knowledge} on a list with \emph{no repeated elements}.
More specifically, given a totally ordered list of unique
elements~$\llist = \{ y_1,y_2,\ldots,y_n \}$, we want to support in zero knowledge queries of the following form: 
\begin{itemize}[nolistsep,noitemsep]
 \item Is $y_i \in \llist$ or $y_i \notin \llist$? 
 \item For two elements $y_i, y_j \in \llist$, what is their relative order, i.e., $y_i <y_j$ or $y_j <y_i$ in $\llist$?
\end{itemize}
We adopt the same adversarial model as
in~\cite{MicaliRK03,Ostrovsky04,ChaseHLMR05}. Thus, we require that proofs reveal nothing beyond the query answer,
not even the size of the list.
There are two parties: the \emph{prover} and the \emph{verifier}.
The \emph{prover} initially commits to a list of values and makes the commitment (a short digest) public.
Informally, the security properties can be stated as follows.
Completeness mandates that honestly generated proofs always satisfy the verification test.
Soundness states that the prover should not be able to come up with a query,
 and corresponding inconsistent (with the initial commitment) answers and convincing proofs.
Finally, zero-knowledge means that each proof reveals the answer and nothing else. In other words, there
must exist a simulator, that given only an oracle access to $\llist$,
can
simulate proofs for membership and order queries that are indistinguishable from real proofs.
Next, we formally describe the model and the security properties.

\subsection{Model}
\label{ssec:ZKLModel}

A Zero Knowledge List scheme (ZKL) consists of three probabilistic polynomial time algorithms - $(\ZS,\ZP = (\Pone, \Ptwo),\ZV)$ and the queries are of the form $(\sublist,\f)$ where $\sublist = \{ z_1, \ldots, z_m \}$, $z_i \in \{ 0,1 \}^\ast$,  is a collection of elements,
$\f=0$ denotes a membership/non-membership query and
$\f=1$ denotes an order query. In the following sections, we will use $\st$ to 
represent a variable that saves the current state of the algorithm (when it finishes execution). 

\begin{description}

\item $\PK \leftarrow \ZS (1^k)$

The $\ZS$ algorithm takes the security parameter as input and produces a public key $\PK$ for the scheme.
The prover and the verifier both take as input the string $\PK$ that can be a random
string (in which case, the protocol is in the common random string model) or
have a specific structure (in which case the protocol is in the trusted parameters model).

\item $ (\com, \st) \leftarrow \Pone(1^k, \PK, \llist)$

$\Pone$ takes the security parameter, the public key $\PK$ and the list $\llist$, and produces a short digest commitment $\com$ for the list.

\item $(\mem,\pr_M, \Or, \pr_O) \leftarrow \Ptwo(\PK,\st,\sublist,\f)$  where $\sublist = \{z_1, \ldots, z_m \}$ and
$\f$ denotes the type of query.
$\Ptwo$ produces the membership information of the queried elements, $\mem = \{ \llist(z_1), \ldots, \llist(z_m) \}$ and the proof of membership (and non-membership), $\pr_M$. Then depending on $\f$:
\begin{description}
\item [$\f = 0$:]
$\Ptwo$ sets $\Or$ and $\pr_O$ to $\bot$ and returns
$(\mem,\pr_M, \bot, \bot)$.
\item [$\f = 1$:]
Let $\tilde{\sublist} = \{ z_i \mid i \in [1,m] \wedge \llist(z_i) \neq \bot \}$.
$\Ptwo$ produces the correct list order among the elements of $\tilde{\sublist}$, $\Or = \permM{\llist}{\tilde{\delta}}$ and the proof of the order, $\pr_O$.
\end{description}

\item $b \leftarrow \ZV(1^k,\PK,\com,\sublist,\f,\mem, \pr_M, \Or, \pr_O)$

$\ZV$ takes the security parameter, the public key $\PK$, the commitment $\com$ and a query $(\sublist,\f)$ and $\mem, \pr_M, \Or, \pr_O$ and returns a bit $b$, where $b= \mathsf{ACCEPT}/\mathsf{REJECT}$.

\end{description}

\paragraph{Example}
Let us illustrate the above functionality with a small example. Let $\llist = \{A, B, C \}$ and $(\sublist, \f) = (\{B, D, A\}, 1)$ be the query.
Then given this query $\Ptwo$ returns $\mem = \{ \llist(B),$ $\llist(D),$ $\llist(A)\}=  
\{\mathsf{true}, \mathsf{false}, \mathsf{true} \}$, the corresponding proofs of membership and non-membership in $\pr_M$,
$\Or = \{ A, B \}$  and the corresponding proof of order between $A$ and $B$ in $\pr_O$. 


\subsection{Security Properties}
\label{ssec:ZKLSec}

\begin{Definition}[Completeness]
\label{def:ZKL-Completeness}
For every list $\llist$, every sublist~$\sublist$ and every $\f$,

\begin{alignat*}{2}
 \Pr[\PK \leftarrow \ZS(1^k); &(\com,\st) \leftarrow \Pone(1^k, \PK, \llist);\\    
 & (\mem, \pr_M, \Or, \pr_O) \leftarrow \Ptwo(\PK,\st,\sublist,\f):\\                     
    & \ZV(1^k,\PK,\com,\sublist,\f,\mem,\pr_M, \Or, \pr_O) &=&  \mathsf{ACCEPT}] = 1         
 \end{alignat*}

\end{Definition}

\begin{Definition}[Soundness]
\label{def:ZKL-Soundness}
For every PPT malicious prover algorithm, $\ZP '$,
for every sublist~$\sublist$ and for every~$\f$ there exists a negligible function $\nu(.)$ such that:

\begin{alignat*}{2}
 \Pr[&\PK \leftarrow \ZS(1^k); \\
 &(\com, \mem^1, \pr_M^1, \Or^1, \pr_O^1,  \mem^2, \pr_M^2, \Or^2, \pr_O^2) \leftarrow \ZP'(1^k,\PK):\\
& \ZV(1^k,\PK,\com,\sublist,\f, \mem^1, \pr_M^1, \Or^1, \pr_O^1) = \mathsf{ACCEPT} {\huge\wedge} \\
 &\ZV(1^k,\PK,\com,\sublist,\f, \mem^2, \pr_M^2, \Or^2, \pr_O^2) = \mathsf{ACCEPT} \wedge \\
 &((\mem^1 \neq \mem^2) \vee (\Or^1 \neq \Or^2)) ] \leq \nu(k)   
 \end{alignat*}
\end{Definition}

\begin{Definition}[Zero-Knowledge]
\label{def:ZKL-ZK}

There exists a PPT simulator $\Sim = (\Sim_1, \Sim_2, \Sim_3)$ such that for every PPT malicious verifier $\advN = (\advN_1, \advN_2)$, there exists a negligible function $\nu(.)$ such that: 
\begin{alignat*}{2}
 |\Pr[\PK \leftarrow \ZS(1^k);(\llist, \st_A)  \leftarrow \advN_1(1^k,\PK); &(\com, \st_P) \leftarrow \Pone (1^k, \PK, \llist):\\
 &\advN_2^{\Ptwo(\PK,\st_P,\cdot)}(\com,\st_A) = 1] - \\
 \Pr[(\PK,\st_S) \leftarrow \Sim_1(1^k);(\llist, \st_A)  \leftarrow \advN_1(1^k, \PK); &(\com, \st_S) \leftarrow \Sim_2(1^k,\st_S):\\
& \advN_2^{\Sim_3^{\llist}(1^k,\st_S)}(\com,\st_A) = 1] | \leq \nu(k)
 \end{alignat*}

Here $\Sim_3$ has oracle access to $\llist$, that is,
given a query $(\sublist,\f)$, $\Sim_3$ can query the list $\llist$ to learn only the membership/non-membership
of elements in $\sublist$
and, if $\f=1$, learn the list order of the elements of $\sublist$ in $\llist$.
\end{Definition}



\subsection{Zero Knowledge List (ZKL) Construction}
\label{sec:ZKLC}

\paragraph{Intuition}
The construction uses zero knowledge set scheme, homomorphic integer commitment scheme, zero-knowledge protocol to prove non-negativity of an integer and 
a collision resistant hash function $\mathbb{H}:\{0,1\}^\ast \rightarrow \{0,1\}^l$, if the elements of the list $\llist$ are larger that $l$ bits.
In particular, given an input list $\llist$ the prover creates a set $D$ where for every element $y_j\in \llist$
it adds a (key,value) pair $(\mathbb{H}(y_j),C(j))$ where
$\mathbb{H}(y_j)$ is a hash of $y_j$ and $C(j)$ is a homomorphic integer commitment of $\mathsf{rank}(\llist,y_j)$ (assuming $\mathsf{rank}(\llist,y_j)=j$
without loss of generality).
The prover then sets up a zero knowledge set on $D$ using $\zks \Pone$ from zero knowledge set construction in Figure~\ref{fig:ZKSModel}.
The output of $\zks \Pone$ is a commitment to $D$, $\com$, that the prover sends to the verifier.

Membership and non-membership queries of the form $(\sublist, 0)$ are replied in the same fashion as in zero knowledge set, by invoking $\zks \Ptwo$
on the hash of every element of sublist $\sublist$.
Recall that as a response to a membership query for a key, $\zks \Ptwo$ returns the value against that key.
In our case, the queried key is $\mathbb{H}(y_j)$ and the value returned by $\zks \Ptwo$,~$D(\mathbb{H}(y_j))$ is the commitment $C(j)$
where $j$ is the rank of element $y_j$ in the list $\llist$, if $y_j \in \llist$. If $y_j \notin \llist$, the value returned is $\bot$.
Hence, the verifier receives the commitments to ranks for queried member elements. These commitments are never opened but are used
as part of a proof for order queries.

For a given order query $(\sublist,1)$, for every adjacent pair of elements in the returned order, $\Or$, the prover
gives a proof of order. Recall that $\Or$ contains the member elements of $\sublist$, arranged according to their order in the list, $\llist$.
To prove the order between two elements $y_i, y_j$, the prover does the following.
Let $\mathsf{rank}(\llist,y_i) = i,\mathsf{rank}(\llist,y_j) = j$,
and $C(i)$, $C(j)$ the corresponding commitments and wlog $i < j$.
As noted above, $C(i), C(j)$ are already returned by the prover as a part of membership proof.
Additionally, the prover augments the membership proof with a commitment to 1, $C(1)$, and its
opening information $\rho$.

Then the verifier computes $C(j-i-1) := C(j)/(C(i)C(1))$ using
the homomorphic property of the integer commitment scheme.
The prover and the verifier then engage in $\Pro(x,r :c = C(x;r) \wedge x \geq 0)$ to convince the verifier that
$C(j-i-1)$ is a commitment to value $x=j-i-1 \ge 0$.
Note that we use the non-interactive zero-knowledge version of the protocol as discussed in Section~\ref{sssec:ZKP}.

It is important to understand why we require $\ZV$ to verify that $j-i-1 \ge 0$ and not $j-i\ge 0$.
By the soundness of the protocol $\Pro(x,r :c = C(x;r) \wedge x \geq 0)$, the probability that  a cheating prover $\ZP'$ will be able to convince $\ZV$ about the non-negativity of a negative integer is negligibly small.
However, since $0$ is non-negative, a cheating prover can do the following:
instead of the rank of an element store the same arbitrary non-negative integer for every element in the list.
Then, $C(j-i)$ and $C(i-j)$ are commitments to $0$ and $\ZP'$ can always succeed in proving an arbitrary order.
To avoid this attack, we require the prove to hold for~$C(j-i-1)$.
An honest prover can always prove
the non-negativity of $C(j-i-1)$ as $|j-i|\geq 1$ for any rank $i,j$ of the list.

Also, we note that the commitments to ranks can be replaced by commitments to a strictly monotonic sequence as long as there is a
1:1 correspondence with the rank sequence. 
In this case, the distance between two elements will also be positive and, hence, the above protocol still holds.

\paragraph{Construction}

Let $\HomIntCom = (\cS,\IC,\IO)$ be the homomorphic integer commitment scheme defined in Section~\ref{sssec:IntCom} and $\zks = (\zks\setupM,$ $\zks\ZP = (\zks \Pone,$ $\zks \Ptwo),\zks\ZV)$ be a ZKS scheme
defined in Section~\ref{sssec:ZKS}.
We denote the output of the prover during the non-interactive statistical zero knowledge protocol $\Pro(x,r :c = C(x;r) \wedge x \geq 0)$ as $\pr_{x\geq0}$.
The construction also uses a hash function, $\mathbb{H}: \{0,1 \}^\ast \rightarrow \{0,1\}^{l}$.
In Figure~\ref{fig:ZKLScheme} we describe in detail our ZKL construction on an input list
$\llist = \{ y_1, \ldots, y_n \}$.

\begin{figure}[ht!]
\caption{Zero Knowledge List (ZKL) Construction}
\begin{framed}

\begin{description}[nolistsep,noitemsep]
 \item [$\PK \leftarrow \bm {\ZS} (1^k)$:]
 
The $\ZS$ algorithm takes the security parameter as input and runs $ \PK_C \leftarrow \cS(1^k), \PK_D \leftarrow \zks\setupM(1^k)$ and outputs $\PK = (\PK_C,\PK_D)$.

\item [$(\com, \st) \leftarrow \bm {\Pone}(1^k, \PK, \llist)$:]

Wlog, let $\rank(\llist,y_j) = j$ and $C(j)$ denote an integer commitment to $j$ under public key $\PK_C$, i.e., $(C(j),r_j) = \IC(\PK_C, j)$.
Then, $\Pone$ proceeds as follows:

\begin{itemize}[noitemsep,nolistsep]
 \item For every $y_j \in \llist$, compute $\mathbb{H}(y_j)$ and $C(j)$.
 \item Set $D: = \{ (\mathbb{H}(y_j), C(j)) \mid \forall y_j \in \llist \}$.
 \item Run $(\com,\st) \leftarrow \zks\Pone(1^k,\PK_D,D)$ and output $(\com,\st)$.
\end{itemize}

\item [$(\mem,\pr_M, \Or, \pr_O) \leftarrow \bm {\Ptwo}(\PK,\st,\sublist,\f)$ where $\sublist = \{ z_1, \ldots, z_m \}$:]
Let $S := \{ \mathbb{H}(z_1), \ldots, \mathbb{H}(z_m) \}$. For all $x \in S$ do the following:
 \begin{itemize}[nolistsep,noitemsep]
  \item Run $(D(x),\pr_x) \leftarrow \zks\Ptwo(\PK_D,\st,x)$.
  \item Set $\Delta_x : = (D(x), \pr_x)$.
 \end{itemize}
Set $\mem := \{ \llist(z_j) \mid \forall z_j \in \sublist \}$ and $\pr_M := \{ \Delta_x \mid x \in S \}$.\\ 
If $\f = 0$ return $(\mem,\pr_M,\bot,\bot)$.\\
If $\f=1$ do the following:\\
Let $\tilde{\sublist} = \{ z_j \mid \forall j \in [1,m] \wedge \llist(z_j) \neq \bot \}$ and $\permM{\llist}{\tilde{\delta}} =  \{ w_1, \ldots, w_{m'} \}$ where $m' \leq m$.
\begin{itemize}[nolistsep,noitemsep]
 \item For all $1 \leq j < m'$, compute $\Delta_{w_{j} <w_{j+1}} = \pr_{\rank(\llist,w_{j+1}) - \rank(\llist,w_j) -1 \geq  0}$.
 \item  Compute $(C(1),\rho) = \IC(\PK_C,1)$.
\end{itemize}
Set $\Or := \permM{\llist}{\tilde{\delta}}$ and $\pr_O = (\{\Delta_{w_{j}<w_{j+1}} \mid (w_j, w_{j+1}) \in \tilde{\delta}\},C(1),\rho)$ and return $(\mem,\pr_M,\Or,\pr_O)$.

\item[ $b \leftarrow \bm{\ZV}(1^k,\PK,\com,\sublist,\f,\mem, \pr_M, \Or, \pr_O)$ where $\sublist = \{ z_1, \ldots, z_m \}$:]

The $\ZV$ algorithm does the following:
\begin{itemize}[noitemsep,nolistsep]
 \item Compute $S = \{ \mathbb{H}(z_1), \ldots, \mathbb{H}(z_m) \}$. 
 \item Parse $\pr_M$ as $\pr_M := \{ \Delta_x = (D(x), \pr_x) \mid x \in S \}$.
 \item For all $x \in S$, run $b \leftarrow \zks\ZV(1^k, \PK_D,x,D(x),\pr_x)$.
\end{itemize}

If $\f=0$ and $b = \mathsf{ACCEPT}$ for all  $x \in S$, output $\mathsf{ACCEPT}$.\\
If $\f=1$, perform the following additional verification steps:

\begin{itemize}[nolistsep,noitemsep]
 \item Let $\Or = \{ w_1, \ldots, w_{m'} \}$.
 \item Parse $\pr_O$ as  $(\{\Delta_{w_{j}<w_{j+1}} \mid (w_j, w_{j+1}) \in \Or\},C(1),\rho)$.
 \item Verify that $\IO(\PK_C, C(1),\rho)$ is 1.
 \item Compute $D(\mathbb{H}(w_{j+1})) / (D(\mathbb{H}(w_j))\times C(1) ) = C(\rank(\llist,w_{j+1}) - \rank(\llist,w_j) - 1)$
 \item Verify that $\rank(\llist,j+1) - \rank(\llist,j) >  0$ using $\pr_{\rank(\llist,j+1) - \rank(\llist,j) - 1 \geq  0}$ using
 $\Pro(x,r :c = C(x;r) \wedge x \geq 0)$ where $x = \rank(\llist,j+1) - \rank(\llist,j)-1$.
\end{itemize}

  If all the verifications pass, only then return $\mathsf{ACCEPT}$.

\end{description}

\end{framed}
\label{fig:ZKLScheme}
\end{figure}



\subsection{Security Proofs}


\paragraph{Proof of Completeness}

Completeness of the ZKL construction in Section~\ref{sec:ZKLC} directly follows from the Completeness of Zero Knowledge Set and Completeness of the protocol  $\Pro(x,r :c = C(x;r) \wedge x \geq 0)$. \qed

\paragraph{Proof of Soundness:}

To simplify the notation, first let us denote using $\Ev_1$ and $\Ev_2$ the following two events:
\begin{align*}
 &\Ev_1 = [\PK \leftarrow \ZS(1^k); \\
 &(\com, \mem^1, \pr_M^1, \Or^1, \pr_O^1,  \mem^2, \pr_M^2, \Or^2, \pr_O^2) \leftarrow \ZP'(1^k,\PK):\\
 &\ZV(1^k,\PK,\com,\sublist,\f, \mem^1, \pr_M^1, \Or^1, \pr_O^1) = \mathsf{ACCEPT} \wedge \\
 &\ZV(1^k,\PK,\com,\sublist,\f, \mem^2, \pr_M^2, \Or^2, \pr_O^2) = \mathsf{ACCEPT} \wedge \\
 &(\mem^1 \neq \mem^2)]
\end{align*}
\begin{align*}
& \Ev_2 = [\PK \leftarrow \ZS(1^k); \\
& (\com, \mem^1, \pr_M^1, \Or^1, \pr_O^1,  \mem^2, \pr_M^2, \Or^2, \pr_O^2) \leftarrow \ZP'(1^k,\PK):\\
& \ZV(1^k,\PK,\com,\sublist,\f, \mem^1, \pr_M^1, \Or^1, \pr_O^1) = \mathsf{ACCEPT} \wedge \\
& \ZV(1^k,\PK,\com,\sublist,\f, \mem^2, \pr_M^2, \Or^2, \pr_O^2) = \mathsf{ACCEPT} \wedge \\
& (\Or^1 \neq \Or^2)]
\end{align*}
Then, Definition~\ref{def:ZKL-Soundness} can be rewritten as
\begin{align*}
& \Pr[\PK \leftarrow \ZS(1^k); \\
& (\com, \mem^1, \pr_M^1, \Or^1, \pr_O^1,  \mem^2, \pr_M^2, \Or^2, \pr_O^2) \leftarrow \ZP'(1^k,\PK):\\
& \ZV(1^k,\PK,\com,\sublist,\f, \mem^1, \pr_M^1, \Or^1, \pr_O^1) = \mathsf{ACCEPT} \wedge \\
& \ZV(1^k,\PK,\com,\sublist,\f, \mem^2, \pr_M^2, \Or^2, \pr_O^2) = \mathsf{ACCEPT} \wedge \\
& ((\mem^1 \neq \mem^2) \vee (\Or^1 \neq \Or^2)) ] = \Pr[\Ev_1 \vee \Ev_2] \leq \Pr[\Ev_1] + \Pr[\Ev_2]
\end{align*}

Now, by the Soundness property of the ZKS in Section~\ref{sssec:ZKS}, $\Pr[\Ev_1]$ is negligible in $k$. Let $\Pr[\Ev_1] = \nu_1(k)$.

Let us consider the event $\Ev_2$. If the malicious prover is successful in outputting two contradictory orders for a 
collection of elements, then there must exist at least one inversion pair, i.e., a pair of elements
$(x_i,x_j) \in \sublist$ such that $x_i < x_j$ in $\Or^1$ and $x_j <x_i$ in $\Or^2$.
Let $C(i)$ and $C(j)$ be the commitments used  as values to prove
the membership of $x_i$ and $x_j$, correspondingly.
Then by the binding property of the integer commitment scheme of Section~\ref{sssec:IntCom}, $\ZP'$ cannot equivocate
$C(i-j)$ or $C(j-i)$ (which is computed by $\ZV$ in the protocol).
(Note that by the soundness property of ZKS, the probability that $\ZP'$
can return two commitments $C(i)$ and $C(i')$, $C(i) \neq C(i')$, where $C(i)$
and $C(i')$ are returned 
to prove membership of $x_i$ in $\pr_M^1$ and $\pr_M^2$, respectively, is negligible
w.r.t.~the same
commitment, $\com$.)
Then according to the protocol, it must be the case that $\ZP'$ could convince $\ZV$ that both $C(i-j)$ and $C(j-i)$ are commitments to positive integers where $i,j$ are two integers.
However, due to the soundness of the protocol $\Pro(x,r :c = C(x;r) \wedge x \geq 0)$, the probability is negligible in $k$. Let $\Pr[\Ev_2] = \nu_2(k)$.

Therefore we have, $\Pr[\Ev_1 \vee \Ev_2] \leq \nu_1(k) + \nu_2(k) \leq \nu(k)$, for some negligible function $\nu(.)$
Hence the soundness error of the ZKL construction must be negligible in $k$. \qed


\paragraph{Proof of Zero-Knowledge:}

Let $\Sim\HomIntCom = (\Sim\cS,$ $\Sim\IC,$ $\Sim\IO)$ be the simulator of $\HomIntCom$ defined in Figure~\ref{fig:homintcom}.
Let $\Sim\zks = (\Sim\zks\setupM,\Sim\zks\ZP = (\Sim\zks \Pone,$ $\Sim\zks \Ptwo),$ $\Sim\zks\ZV)$ be the simulator for the ZKS in Figure~\ref{fig:ZKSModel}.

Now let us define $\Sim = (\Sim_1,\Sim_2,\Sim_3)$, a simulator for ZKL (Definition~\ref{def:ZKL-ZK}),
that has access to the system parameter $\mathbb{H}$. 
\begin{itemize}[noitemsep]
 \item $\Sim_1(1^k)$ runs $(\PK_D,\TK_D) \leftarrow \sZS(1^k)$ and $(\PK_C,\TK_C)  \leftarrow \Sim\cS(1^k)$.
 $\Sim_1(1^k)$ outputs $\{ \PK = (\PK_D,\PK_C), \TK = (\TK_D,\TK_C) \}$.
 \item $\Sim_2$ runs $\Sim\zks \Pone$ to generate commitment $\com$.
 \item In response to membership queries $(\f = 0)$, $\Sim_3$ does the following: 
\begin{itemize}[nolistsep,noitemsep]
 \item $\Sim_3$ maintains a table of queried elements as tuples $\langle x_i, v_i,r_i \rangle$ where $x_i$ is the queried element
 and $v_i$ is the value that $\Sim_3$ has sent when $x_i$ was queried. We explain how $r_i$ is computed next. 
 \item For a queried element $y$, $\Sim_3$ checks the table.
  If $y$ is not in the table and, hence, has not been queried before,
 $\Sim_3$ makes an oracle access to $\llist$ on $y$.
 If $y \in \llist$, $\Sim_3$ computes a fresh commitment to 0, $(C(0),r) := \Sim\IC(0)$,
  and stores $\langle y, C(0),r\rangle$. If $y \notin \llist$, then $\Sim_3$ stores $\langle y, \bot, \bot \rangle$.  
 \item $\Sim_3$ responds to membership queries by invoking $\Sim\zks \Ptwo$ on $\mathbb{H}(y)$ and returning the same output.
\end{itemize}
 
 \item For order queries ($\f = 1$), $\Sim_3$ additionally does the following.
 Let $\sublist$ be the queried sublist. $\Sim_3$ makes an oracle access to $\llist$ to get the correct list order of the elements of $\sublist$
 that are present in $\llist$. Let
 $\Or = \{ y_1, \ldots, y_m \}$ be the returned order.
 \item $\Sim_3$ computes $(C(1),\rho) = \Sim\IC(\PK_C,1)$.
 \item Let $\{\langle y_1, v_1, r_1\rangle, \ldots, \langle y_m, v_m,r_m\rangle\}$ be the entries of $\Sim_3$'s table that correspond to elements in $\sublist$.
 Then for every pair $(y_j, y_{j+1})$, $\Sim_3$ equivocates $(v_{j+1} / (v_j \times C(1)))$ using $\TK_C$ to a commitment to any arbitrary positive integer $u$.
 In other words, $\Sim_3$ equivocates the commitment $C(\rank(\llist,y_{j+1})-\rank(\llist,y_j)-1)$ to a commitment to an arbitrary positive integer $u$.
Finally, $\Sim_3$ computes $\pr_{u \geq 0}$ to prove the order between $(y_j, y_{j+1})$.

\end{itemize}

$\Sim_3$ achieves the following.
For every newly queried element that is in the list, $\Sim_3$ generates and stores a fresh commitment to $0$, and sends it to the verifier.
Hence, $\Sim_3$ sets $\rank = 0$ to all queried elements.
By the hiding property of the integer commitment scheme, the commitments are identically distributed to the commitments computed by the real prover, $\Pone$.
Now, with the help of $\TK_C$,~$\Sim_3$ can equivocate a commitment to any value it wants.
Hence, whenever he needs to prove order $y_i < y_j$, 
$\Sim_3$ equivocates the commitment to $\rank(\llist,y_{j+1})-\rank(\llist,y_j)-1$ to any arbitrary positive integer $u$ and invokes
the protocol $\Pro(u,r :c = C(u;r) \wedge u \geq 0)$  to compute $\pr_{u>0}$. 

Since the protocol $\Pro(u,r :c = C(u;r) \wedge u \geq 0)$ is Zero Knowledge (Statistical),
$\Sim = (\Sim_1,\Sim_2,\Sim_3)$ simulates our ZKL scheme. \qed

We note that the constructions with which we instantiate ZKL have the simulators assumed above.
In particular, for $\Sim\zks$ we use the simulator of the ZKS construction of~\cite{ChaseHLMR05}.
For $\Sim\HomIntCom$ we use the construction of~\cite{DamgardF02} and for completeness define
a simulator in Figure~\ref{fig:SimIC}.

\subsection{Efficiency}
\label{sssec:ZKLEfficiency}

The efficiency of our ZKL construction depends on the efficiency of the underlying constructions that we use.
We consider the the ZKS construction used in~\cite{ChaseHLMR05}  based on Mercurial Commitments,
the homomorphic integer commitment of~\cite{DamgardF02} and a protocol for non-negative proof of a commitment
from~\cite{Lipmaa03}. Each of these constructions is described in more detail in Appendix.
Mercurial commitment was later generalized by \cite{Catalano:2008,Libert2010} but the basic ZKS construction remains the same.

Recall that $k$ is the security parameter of the scheme, $l$ is the size of the output of the hash function $\mathbb{H}$,
$n$ is the number of elements in the list $\llist$ and $m$ is the number of elements in query~$\sublist$. Similarly to~\cite{ChaseHLMR05} 
we assume that $l=k$.
For every element in~$\llist$, $\Pone$ hashes the element and computes a commitment to its rank,
taking time $O(1)$. It then computes $n$ height-$k$ paths to compute
the commitment $\com$ to a list, $\llist$, takes time $O(kn)$, where $|\llist| = n$. For further details please see Appendix~\ref{sec:ZKSCons}.

Membership (non-membership) proof of a single element consists of $O(k)$ mercurial decommitments.
Using \cite{Libert2010}, we can have each mercurial decommitment constant size, i.e, $O(1)$.
The order proof between two elements requires membership proofs for both
elements and $\pr_{u-1 \geq 0}$ where $u$ is the absolute difference between the rank of the corresponding elements.
$\pr_{u-1 \geq 0}$ is computed using
$\Pro(x,r :c = C(x;r) \wedge x \geq 0)$ which takes $O(1)$ time.
Hence, computing a membership proof for a single element or an order
proof for two elements  takes time $O(k)$.
More generally, the prover's time for a query on sublist $\sublist$ is
$O(mk)$, where $m = |\sublist|$.

The verifier needs to verify $O(k)$ mercurial decommitments for every element in the query $\sublist$
and verify order between every adjacent pair of elements in $\sublist$ using $\Pro(u,r :c = C(u;r) \wedge u \geq 0)$.
Therefore, the asymptotic run time of the verification is $O(mk)$. 

We summarize the properties and efficiency of our ZKL construction in Theorem~\ref{thm:ZKL}.

\begin{Theorem}
\label{thm:ZKL}

The zero-knowledge list (ZKL) construction of
Figure~\ref{fig:ZKLScheme} satisfies the security
properties of completeness (Definition~\ref{def:ZKL-Completeness}),
soundness (Definition~\ref{def:ZKL-Soundness}) and zero-knowledge
(Definition~\ref{def:ZKL-ZK}). 
The construction has the following performance, where  $n$ is the list size, $m$ is the
query size, each element of the list is a $k$-bit\footnote{If not, we can use a hash function to reduce every element to a $k$-bit string,
as shown in the construction} string and $N$ is the number
of all possible $k$-bit strings.
\begin{itemize} \itemsep 0pt
\item The prover executes the setup phase in $O(n \log N)$ time and space.
\item In the query phase, the prover computes the proof of the answer
  to a query in $O(m \log N)$ time.
\item The verifier verifies the proof in  $O(m \log N)$ time and space.
\end{itemize}

\end{Theorem}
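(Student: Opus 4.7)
The plan is to derive each property of Theorem~\ref{thm:ZKL} by reduction to the corresponding property of the three ingredients: the ZKS scheme, the homomorphic integer commitment $\HomIntCom$, and the non-negativity protocol $\Pro(x,r:c=C(x;r)\wedge x\ge 0)$. Completeness is routine: for an honest $\ZP$, every $\zks\Ptwo$ proof verifies by ZKS completeness; $\IO(\PK_C,C(1),\rho)=1$ by correctness of $\IC$; and by the additive homomorphism the quotient $D(\mathbb{H}(w_{j+1}))/(D(\mathbb{H}(w_{j}))\cdot C(1))$ is a commitment to $\rank(\llist,w_{j+1})-\rank(\llist,w_{j})-1$, which is non-negative for consecutive members of $\Or$, so $\pr_{\cdot\ge 0}$ verifies by completeness of $\Pro$.

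For soundness I would follow the case split already sketched above and bound $\Pr[\Ev_1\vee\Ev_2]\le\Pr[\Ev_1]+\Pr[\Ev_2]$. The first term (two accepting transcripts disagreeing on $\mem$) is negligible by ZKS soundness composed with collision-resistance of $\mathbb{H}$: a disagreement on $\llist(z_i)$ would force one transcript to accept an incorrect $\zks\ZV$ outcome on $\mathbb{H}(z_i)$. For the second term, any disagreement on $\Or$ must contain at least one inversion pair $(x_i,x_j)$; by ZKS soundness the value returned as $D(\mathbb{H}(x_i))$ is pinned down by $\com$, so both transcripts must work with the same commitments $C(i),C(j)$. The verifier would then have to accept both $\pr_{(j-i-1)\ge 0}$ and $\pr_{(i-j-1)\ge 0}$, but these integers sum to $-2$, so at most one is non-negative and the other proof must break soundness of $\Pro$. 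The $-1$ shift in the protocol is essential here: without it, the strategy of committing all ranks to a single value would make both quotients commit to $0$ and succeed trivially.

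For zero-knowledge I would instantiate the simulator $\Sim=(\Sim_1,\Sim_2,\Sim_3)$ sketched in the paper. $\Sim_1$ runs the trapdoor setups $\Sim\cS$ and $\sZS$, keeping $\TK_C,\TK_D$; $\Sim_2$ runs $\sZP$ to produce $\com$; $\Sim_3$ memoizes, for each newly queried member $y$, a fresh simulated commitment $(C(0),r)=\Sim\IC(\PK_C,0)$ and answers membership queries via the simulated ZKS prover on $\mathbb{H}(y)$ with value $C(0)$ (or with $\bot$ for non-members). On an order query $\Sim_3$ additionally picks an arbitrary $u\ge 1$ for each adjacent pair and uses $\TK_C$ to equivocate the homomorphic quotient to a commitment to $u$, then honestly produces $\pr_{u\ge 0}$. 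Indistinguishability follows by a standard hybrid argument: replace ZKS with its simulator, each integer commitment with a simulated one, and each non-negativity proof with a simulated one; each step is indistinguishable by the corresponding primitive's zero-knowledge or hiding property.

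For efficiency I would instantiate ZKS with the mercurial-commitment construction of \cite{ChaseHLMR05,Libert2010}, $\HomIntCom$ with \cite{DamgardF02}, and $\Pro$ with \cite{Lipmaa03}. The setup cost is dominated by $\zks\Pone$, which stores $n$ authentication paths of length $\log N$ in the ordered hash tree over the $k$-bit universe, giving $O(n\log N)$ time and space. Each membership proof is $O(\log N)$ constant-size mercurial decommitments, and each adjacent-pair order proof requires $O(1)$ homomorphic operations plus one call to $\Pro$; summing over the $m$ queried elements yields $O(m\log N)$ at both $\Ptwo$ and $\ZV$. The main obstacle I expect is the $\Ev_2$ reduction: one must argue carefully that the \emph{value} component $D(\mathbb{H}(x_i))$, not merely the membership bit, is bound by $\com$, so that the adversary cannot slip in different rank commitments across the two transcripts. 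This is a direct consequence of ZKS soundness applied to the (key,value) pair, but must be invoked twice to pin down the same $C(i),C(j)$ before the non-negativity argument can be used.
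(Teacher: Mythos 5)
Your proposal is correct and follows essentially the same route as the paper: the same $\Ev_1/\Ev_2$ decomposition with the inversion-pair argument for soundness, the same trapdoor-equivocating simulator committing to $0$ for ranks and opening quotients to arbitrary positive integers for zero-knowledge, and the same instantiation-based efficiency accounting. Your additions (explicitly invoking collision resistance of $\mathbb{H}$ in the $\Ev_1$ bound and spelling out the hybrid argument for indistinguishability) only make explicit steps the paper leaves implicit.
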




\section{Privacy Preserving Authenticated List (PPAL)}
\label{sec:scheme}

In the previous section we presented a model and a construction for a new primitive called
zero knowledge lists. As we noticed earlier, ZKL model gives the desired functionality
to verify order queries on lists. However, the corresponding construction does not provide the efficiency one may
desire in cloud computing setting where the verifier (client) has limited memory resources.
In this section we address this setting and define a model
for privacy preserving authenticated lists, \ppadsl, that is executed between
three parties. This model, arguably, fits cloud scenario better and as we will see our construction
is also more efficient. In particular, the size of a single proof in \ppadsl~is
$O(1)$ vs. $O(k)$ in ZKL.

\subsection{Model}

\ppadsl~is a tuple of three probabilistic polynomial time algorithms $(\setupM,\Q,\ve)$
executed between the owner of the data list $\llist$,
the server who stores $\llist$ and answers
queries from the client and the client who issues queries
and verifies corresponding answers.

\begin{description}

 \item $(\D, \K) \leftarrow \setupM (1^k, \llist)$
 
 This algorithm takes the security parameter and the source list~$\llist$ as input and produces two digests~$\D$ and~$\K$ for the list.
 This algorithm is run by the owner. $\D$ is sent to the client and $\K$ is sent to the server.

  \item $ (\Or, \pr) \leftarrow \Q(\K, \llist, \sublist$)
  
  This algorithm takes the key generated by the owner, $\K$, the source list, $\llist$ and a queried sublist, $\sublist$, as input, where a sublist of a list $\llist$ is defined as: \ele($\sublist$) $\subseteq$ \ele(\li).
  The algorithm produces the list order of the elements of $\llist$, $\Or = \permM{\mathcal{L}}{\delta}$, and a proof, $\pr$, of the answer. This algorithm is run by the server.

   \item $b \leftarrow \ve(\D, \sublist , \Or, \pr)$
   
   This algorithm takes $\D$, a queried sublist $\sublist$, $\Or$ and $\pr$ and returns a bit $b$, where $b= \mathsf{ACCEPT}$ iff \ele($\sublist$) $\subseteq$ \ele(\li) and  $\Or = \permM{\mathcal{L}}{\delta}$. Otherwise,
   $b= \mathsf{REJECT}$. This algorithm is run by the client.

\end{description}

\subsection{Security Properties}

A \ppadsl~has three important security properties. The first property is \emph{Completeness}. This property ensures that for any list~$\llist$ and for any sublist $\sublist$ of $\llist$,
if the $\D,\K,\Or,\pr$ are generated honestly, i.e., the owner and the server honestly execute the protocol, then the client will be always convinced about the correct list order of $\sublist$.

\begin{Definition}[Completeness]
\label{def:completeness}
For all lists $\llist$ and all sublists~$\sublist$
\begin{eqnarray*}
\Pr[(\D, \K) \leftarrow \setupM (1^k, \llist);(\Or, \pr) \leftarrow \Q(\K,\llist, \sublist): \\
\ve(\D, \sublist , \Or, \pr) = \mathsf{ACCEPT}] =1
\end{eqnarray*}

\end{Definition}
The second security property is \emph{Soundness}. This property ensures that once an honest owner generates a pair $(\D, \K)$
for a given list $\llist$,
even a malicious server 
will not be able to convince the client of incorrect order of elements belonging to the list $\llist$. This property ensures integrity of the scheme.

\begin{Definition}[Soundness]
 \label{def:soundness}
For all PPT malicious Query algorithms $\Q '$, for all lists $\llist$ and all query sublists~$\sublist$, there exists a negligible function $\nu(.)$ such that:
\begin{align*}
 \Pr[(\D, \K) \leftarrow \setupM (1^k, \llist);(\Or_1, \pr_1, \Or_2, \pr_2) &\leftarrow \Q'(\K,\llist): \\
 \ve(\D, \sublist , \Or_1, \pr_1) = \mathsf{ACCEPT}& \wedge \\
 \ve(\D, \sublist , \Or_2, \pr_2) = \mathsf{ACCEPT} &\wedge \\
 (\Or_1 \neq \Or_2&) ] \leq \nu(k)
\end{align*} 
 
\end{Definition}

The last property is \emph{Zero-Knowledge}. This property captures that even a malicious client cannot learn anything about the list (and its size)
beyond what the client has queried for.
Informally,
this property involves showing that there exists a simulator such that even for adversarially chosen list~$\llist$, no adversarial client (verifier) can tell if it is talking to an honest owner and server pair
who are committed to $\llist$ or to the simulator who only has oracle access to the list $\llist$.

\begin{Definition}[Zero-Knowledge]
\label{def:ZK}

There exists a PPT simulator $\Sim = (\Sim_1, \Sim_2)$ such that for all PPT malicious verifiers $\advN = (\advN_1, \advN_2)$, there exists a negligible function $\nu(.)$ such that:

\begin{align*}
 | \Pr[(\llist, \st_A)  \leftarrow \advN_1(1^k); &(\D, \K) \leftarrow \setupM (1^k, \llist):\\
 &\advN_2^{\Q(\K,\llist,.)}(\D,\st_A) = 1] - \\
 \Pr[(\llist, \st_A)  \leftarrow \advN_1(1^k); &(\D, \st_S) \leftarrow \Sim_1(1^k):\\
 &\advN_2^{\Sim_2^{\llist}(1^k,\st_S)}(\D,\st_A) = 1] | \leq \nu(k)
\end{align*}
 
\end{Definition}

Here $\Sim_2$ has oracle access to $\llist$, that is given a sublist $\sublist$ of $\llist$, $\Sim_2$ can query the list $\llist$ to learn only the correct list order of the sublist $\sublist$ and cannot look at $\llist$.

\paragraph{Attack on \cite{Kundu12}'s scheme}
We observer that the scheme presented in~\cite{Kundu12}
does not satisfy the zero knowledge property of \ppadsl~ for the following reason.
The scheme of~\cite{Kundu12} generates a $n'$ bit secure name,
where $n' \geq n$,
for each element of the list of size~$n$. 
A high level idea of the scheme is as follows.
The secure name of an element has dedicated bits, where
each bit corresponds to the pairwise order
between this element and every other element in the list.
To prove the order between any two elements, the verifier needs to know secure names for
both of them.
Then, given any two secure names, the verifier can easily compute the required bit.
Two order queries $A < B$ and $A<C$,
as per the scheme of \cite{Kundu12},
reveal to the client the secure names of all three elements $A$, $B$ and $C$.
Hence, given the secure names of $B$ and $C$, the client can easily compute the bit which preserves the order information between $B$ and $C$ and infer the order between
them. Therefore, it is impossible to write a simulator~$\Sim$ for an adversarially generated list such that the view of the adversary is indistinguishable as in Definition~\ref{def:ZK}.

\section{\ppadsl~Construction}
\label{sec:construction}

We present an implementation of a privacy preserving authenticated list in Figure~\ref{fig:protocol}.
We provide the intuition of our method followed
by a more detailed description.

\paragraph{Intuition}
Every element of the list is associated with a member
witness where a member witness is a randomized bilinear accumulator.
This allows us to encode the rank of the element (i.e., accumulate it) inside of the member witness
and then ``blind'' this rank information with randomness.
Every pair of (element, member witness) is signed by the owner
and the signatures are aggregated using bilinear aggregate signature scheme presented in Figure~\ref{fig:BLGS}, to compute the list digest signature.
Signatures and digest are sent to the server, who can use them to prove authenticity when answering client queries.
The advantage of using an aggregate signature is for the the server to be
able to compute a valid digest signature for any sublist of the source list
by exploiting the homomorphic nature
of aggregate signatures, that is
without owner's involvement.
Moreover, the client can verify the individual signatures in a single shot using aggregate signature verification.

The owner also sends linear (in the list size) number of random elements used in the encoding of member witnesses.
These random elements
allow the server to compute the order witnesses on queried elements, without the owner's involvement.
The order witness encodes the distance between two elements, i.e., the difference between element ranks,
without revealing anything about it.
Together with randomized accumulators as member witnesses, the client
can later use bilinear map to verify the order of the elements.

\paragraph{Construction}
Our construction for $\ppads$ is presented in Figure~\ref{fig:protocol}.
It is based on bilinear accumulators and bilinear aggregate signature introduced in \cite{boneh2003aggregate}
and described here in Section~\ref{sec:agg-sign}.
We denote \emph{member witness}
for $x_i \in \llist$ as \authunitm{x_i}.
For two elements $x_i, x_j \in $ \li, such that $x_i < x_j$ in \li $\;$, $\authunitoM{x_i}{x_j}$ is an \emph{order witness} for the order between $x_i$ and $x_j$.

\afterpage{%
\thispagestyle{empty}
\begin{figure}
\caption{Privacy-Preserving Authenticated List (PPAL) Construction}
\begin{framed}

\textbf{Notation}: $k \in \mathbb{N}$ is the security parameter of the scheme; $G,G_1$ multiplicative cyclic groups of prime order $p$ where $p$ is large $k$-bit prime; $g$:
a random generator of $G$; $e$: computable bilinear nondegenerate map $e: G \times G \rightarrow G_1$;
 $\mathcal{H}: \lbrace 0,1 \rbrace^{\ast} \rightarrow G$: full domain hash function
 (instantiated with a cryptographic hash function);
all arithmetic operations are performed using$\mod p$.
$\llist$ is the input list of size~$n = \p$, where $x_i$ are distinct
and~$\mathsf{rank}(\llist,x_i) = i$.
System parameters are $(p,G,G_1,e,g,\mathcal{H})$.

\begin{description}[noitemsep,nolistsep]

\item

$(\D,\K) \leftarrow {\bm \setupM}(1^k,\llist)$, where
\begin{description}[noitemsep,nolistsep]
 \item $\llist$ is the input list of length~$n$;
 \item $\D = (g^\x, \sigma_\llist)$;
 \item $\K = (g^\x, \sigma_\llist\ , \langle g, g^s, g^{s^2}, \ldots, g^{s^{n}} \rangle, \Sigma_\llist, \R )$ and
 \begin{description}[noitemsep,nolistsep]
 \item $ \langle s\xleftarrow{\$} \ZZ_p^{\ast}, \x\xleftarrow{\$} \ZZ_p^{\ast}\rangle$ is the secret key of the owner;
 \item $\Sigma_\llist = \langle \lbrace  {{t_{x_i \in \llist}, \sigma_i}} {\rbrace}_{1 \leq i \leq n}, \mathcal{H}({\omega}) \rangle$ is member authentication information and
 $\omega$ is the list $\mathsf{nonce}$;
 \item $\R = \langle r_1,r_2, \ldots, r_n \rangle, r_i \neq r_j$ for $i \neq j$, is order authentication informations;
 \item $\sigma_\llist$ is the digest signature of the list $\llist$.
 \end{description}
\end{description}
These elements are computed as follows:
\begin{description}[noitemsep,nolistsep]
 \item For every element~$x_i$ in $\llist = \{x_1, \ldots, x_n\}$: Pick $r_i \xleftarrow{\$} \ZZ_p^{\ast}$. Compute member witness for index $i$ as $t_{x_i \in \llist} \leftarrow ({g^{s^i}})^{r_i} $
  and signature for element $x_i$ as ${\sigma_i} \leftarrow {\mathcal{H}(t_{x_i \in \llist}||x_i)}^\x $.
 \item Pick the $\mathsf{nonce}$, ${\omega} \xleftarrow{\$} {\lbrace 0,1 \rbrace}^{\ast}$, which should be unique for each list.
 \item Set $ \salt \leftarrow {({\mathcal{H}}({{\omega}})})^\x$. $\salt$ is treated as a list identifier which protects against mix-and-match attack and also protects from the leakage that the queried result is the complete list.
 \item The list digest signature is computed as: $ {\sigma}_\llist \leftarrow  \salt \times {\prod}_{1 \leq i \leq n} {\sigma_i}$.
\end{description}
\item $(\Or,\pr) \leftarrow {\bm \Q} (\K,\llist,\sublist)$, where
\begin{description} [noitemsep,nolistsep]
 \item $\sublist = \lbrace z_1, \ldots, z_m \rbrace$ s.t.~$z_i \in \llist, \; \forall i \in [1,m]$, is the queried sublist; 
 \item $\Or = \pi_{\llist}(\sublist) = \{y_1, y_2, \ldots, y_m\}$;
 \item $\pr = (\Sigma_\Or,\Omega_\Or)$:
 \begin{description} [noitemsep,nolistsep]
 \item $\Sigma_\Or = (\sigma_\Or,T, \lambda_{\llist'})$ where~$\llist^{\prime} = \llist \setminus {\sublist}$;
 \item $T = \lbrace  t_{y_1 \in \llist}, \ldots,  t_{y_m \in \llist} \rbrace$;
 \item $\Omega_\Or =  \{$\authunito{y_1}{y_{2}}, \authunito{y_2}{y_{3}}, \ldots,\authunito{y_{m-1}}{y_{m}}$\}$.
\end{description}
\end{description}

These elements are computed as follows:
\begin{description}[noitemsep,nolistsep]
 \item The digest signature for the sublist:  ${\sigma}_{\Or} \leftarrow {\prod}_{y_j \in \Or} {\sigma_{\mathsf{rank}(\llist, y_j)}}$.
 \item  The member verification unit: $\lambda_{\llist'} \leftarrow \mathcal{H}({\omega}) \times {\prod}_{{x \in \llist^\prime}} {\mathcal{H}(t_{x_{{\mathsf{rank}(\llist, x)}}\in \llist}||x)}$.
 \item For every $j \in [1,m-1]$:  Let $i' = \mathsf{rank}(\llist, y_j)$ and $i'' = \mathsf{rank}(\llist, y_{j+1})$, and $r' = \R[i']^{-1}$ and $r'' = \R[i'']$.
      Compute \authunito{y_j}{y_{j+1}} $ \leftarrow {(g^{s^d})}^{r'r''}$ where $d =|i'-i''|$. 
\end{description}

\item $b \leftarrow {\bm \ve} (\D,\sublist,\Or, \pr)$ where $\D,\sublist,\Or,\pr$ are defined as above.\\ 
The algorithm checks the following:
\begin{itemize}[noitemsep,nolistsep]
 \item Compute ${{\xi}} \leftarrow {\prod}_{y_j \in \sublist}  {\mathcal{H}}(t_{y_j \in \llist}||y_j)$ and check if $e({\sigma}_{\Or},g) \stackrel{?}{=} e({\xi},g^\x)$
 \item Check if $e({\sigma}_\llist,g) \stackrel{?}{=} e({\sigma}_{\Or},g) \times e(\lambda_{\llist'},g^\x)$
 \item For every $j \in [1, m-1]$, $e(t_{y_j \in \llist}, $\authunito{y_j}{y_{j+1}}$) \stackrel{?}{=} e($\authunitm{y_{j+1}}$, g)$
\end{itemize}
Return $\mathsf{ACCEPT}$ iff all the equalities of the three steps verify, $\mathsf{REJECT}$ otherwise.

\end{description}
\end{framed}
\label{fig:protocol}
\end{figure}
\clearpage
}

The construction works as follows.
In the $\setupM$ phase, the owner generates secret key $(\x,s)$ and public key $g^\x$, where $\x$ is used for signatures. 
The owner picks a distinct random element~$r_i$ from the group $\ZZ_p^\ast$ for each element $x_i$ in the list~$\llist$, $i \in [1,n]$.
The element~$r_i$ is used to compute the member
witness~$t_{x_i \in \llist}$.
Later in the protocol, together with~$r_j$,
it is also used by the server
to compute the order witness $t_{x_i < x_j}$ for $x_i$ and $x_j \in \llist$ where~$x_i < x_j$ in~$\llist$.
The owner also computes individual signatures,~$\sigma_i$'s,
for each element and aggregates them into a digest signature~$\sigma_\llist$ for the list. 
It returns the signatures and
member witnesses for every element of $\llist$ in $\Sig$
and the set of random numbers picked for each index
to be used in order witnesses in~$\R$.
The owner sends $\D = (g^\x, \sigma_\llist)$ to the client
and $\K = (g^\x, \sigma_\llist\ , \langle g, g^s, g^{s^2}, \ldots, g^{s^{n}} \rangle, \Sigma_\llist, \R )$ and $\llist$ to the server. 

Given a query~$\sublist$, the server returns a response list~$\Or$ that contains
elements of~$\sublist$ in the order they appear in~$\llist$.
The server uses information in~$\Sig$ to
build $\Sigma_\Or$ from member witnesses
of elements in~$\sublist$, and compute the digest signature~$\sigma_\sublist$
for $\sublist$ and its membership verification unit~$\lambda_{\llist'}$
where $\llist' = \llist \setminus \sublist$. The server uses information in~$\R$ to compute $\Rgen_\Or$.
The client first checks that all the returned elements are indeed signed by the owner using $\Sigma_\Or$ and then verifies the order of the returned elements using $\Rgen_\Or$.

\paragraph{Preprocessing at the Server}
For a query~$\sublist$ on the list~$\llist$
of length~$m$ and~$n$, respectively, the $\Q$~algorithm
in Figure~\ref{fig:protocol}
takes $O(m)$ time to compute~$\sigma_\delta$ and
$O(n-m)$ to compute~$\lambda_{\llist'}$.
The server can precompute and store some products to reduce the overall running
time of this algorithm to~$O(m\log n)$ when $m \ll n$.
The precomputation proceeds as follows.

Let $\psi_i = \mathcal{H}(t_{x_i \in \llist}||x_i)$ for every element in $\llist = \{x_1, \ldots, x_n\}$.
Then the precomputation proceeds by computing a balanced binary tree 
over $n$ leaves, where $i$th leave corresponds to~$x_i$ and stores~$\psi_i$.
Each internal node of the tree stores the product of its children.
Therefore the root stores the complete product  
${\prod}_{i=1}^n \psi_i$. (See Figure~\ref{fig:RT} for an illustration of the tree.)
Computing each internal node takes time $O(1)$ since at each internal node product of at most two children is computed.
Since the tree has $O(n)$ nodes,
the precomputation takes time $O(n)$ and requires~$O(n)$ storage.

\begin{figure}[H]
    \begin{center}
        \includegraphics[scale=0.4]{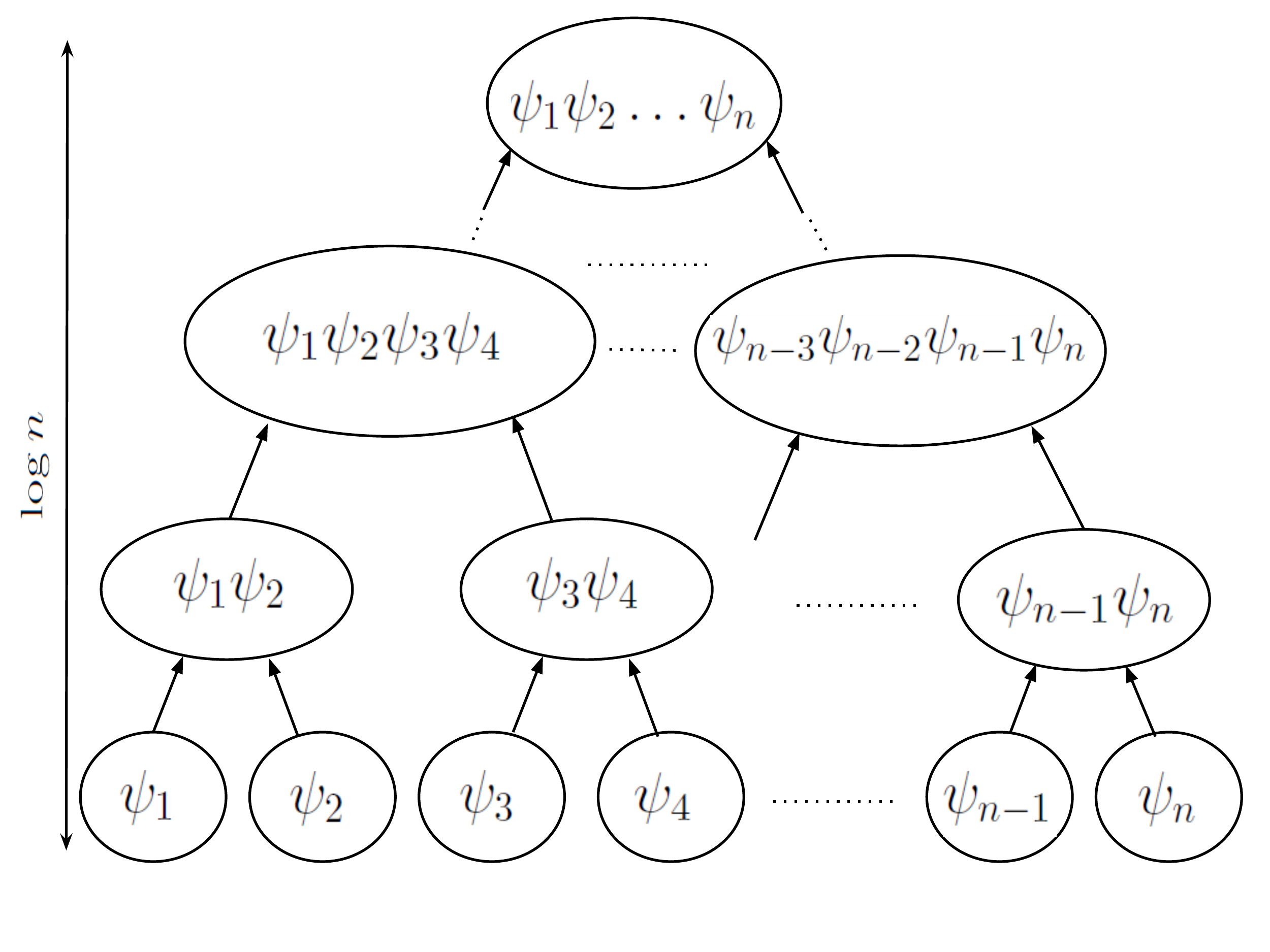}
    \caption{Range tree showing the precomputed products where $\psi_i = \mathcal{H}(t_{x_i \in \llist}||x_i)$.
    Precomputed products allow to speed up the computation time of $\Q$ algorithm in Figure~\ref{fig:protocol} when $m \ll n$.}
    \label{fig:RT}
    \label{default}
    \end{center}
\end{figure}

Now, computing $ \lambda_{\llist'}$ will require computing the product of $m+1$ partial products, i.e.,
the intervals between elements in the query. Since each partial product can be computed using at most $O(\log n)$ of the precomputed products (as the height of the tree is $O(\log n)$), 
the total time required to compute the product of $m+1$ partial products is $O((m+1) \log n) = O(m \log n)$.
Hence, the precomputation is useful whenever $m \ll n$.
Otherwise, when $m = O(n)$, the server can run the $\Q$~as mentioned in the scheme in Figure~\ref{fig:protocol}
in time $O(n)$.

\paragraph{Efficiency}

We measure the time and space complexity
of our scheme in terms of~$n$, the length of the list~$\llist$,
and~$m$, the length of the queried sublist~$\sublist$.
We use $|\llist|$ notation to denote the length of a list $\llist$.
Recall that \ele($\sublist$)  $\subseteq$ \ele($\llist$). 
We discuss and summarize the time and space complexity for each party as follows:

\begin{description}
 \item [Owner]The \setup~algorithm computes member and order witnesses for each element,
 along with signatures for each element.
 Hence, the algorithm runs in time $O(n)$ and requires~$O(n)$ space.

 \item [Server] Computing $\lambda_{\llist'}$ that takes time $O(n-m)$, as it touches $|\llist \setminus \query|$ elements
 and computing ${\sigma}_{\sublist}$ takes time $O(m)$.
 Hence, the overall runtime of computing $\lambda_{\llist'}$ and ${\sigma}_{\sublist}$~is $ O(n)$.
 The server can precompute and store some products of the signatures, as mentioned above,
 to reduce the overall running time to $O(\min \lbrace m\log n,n \rbrace)$.
 In addition the server calculates $m - 1$
 order witnesses each taking constant time, hence,
 $O(m)$ in total. So
 the overall run time for the server is $O(\min \lbrace m\log n,n \rbrace)$.
 The server needs to store the list itself, $\K$ and the precomputed products.
 Since each of these objects is of size $O(n)$, the space requirement
 at the server is~$O(n)$.
 
 \item [Client] $\ve$~computes a hash for each element in the
 query~$\sublist$,
 and then checks the first two equalities using bilinear map. 
 This requires~$O(m)$ computation. 
 In the last step $\ve$~checks $O(m)$ bilinear map equalities which takes time $O(m)$.
 Hence the overall verification time of the client is $O(m)$.
 During the query phase, the client requires
 $O(m)$ space to store its query and its response with the proof for
 verification. The client also needs to store $\D$ which requires $O(1)$ space.
\end{description}

\paragraph{Efficiency of PPAL via ZKL}

We noted in the introduction that we can adapt zero knowledge lists
to implement a PPAL scheme.
Recall that we can do this by making the owner run $\Pone$, the server run $\Ptwo$ and the client run $\ZV$ of
ZKL (see Section~\ref{ssec:ZKLModel} for the description of ZKL algorithms).
Here we estimate the efficiency of a PPAL construction based on the construction
of ZKL presented in~Figure~\ref{fig:ZKLScheme} and compare it with the \ppadsl~construction presented in this section.

From the discussion of efficiency of the ZKL construction in Section~\ref{sssec:ZKLEfficiency}, the time and space complexity of each party
in \ppadsl~adaptation of ZKL readily follows below.

\begin{description}[nolistsep,noitemsep]
 \item [Owner] The owner runs in time $O(kn)$ and $O(kn)$ space, where $k$ is the security parameter.

 \item [Server] To answer a query of size $m$, the server runs in time $O(km)$. The space requirement
 at the server is~$O(kn)$ since he has to store the $O(kn)$ commitments produced by the owner.
 \item [Client] The verification time of the client is $O(km)$.
 During the query phase, the client requires
 $O(km)$ space to store its query and its response with the proof for
 verification.
\end{description}

Hence, the \ppadsl~construction presented in Figure~\ref{fig:protocol} is a factor of $O(k)$
more efficient in space and computation requirements as compared to
an adaptation of the ZKL construction from~Figure~\ref{fig:ZKLScheme} in \ppadsl~model.

 \paragraph{Batch ordering query}
 The client can learn the total order among $m$ different elements of the list using a basic ordering query on two elements. This requires $O(m^2)$ individual order queries, where each verification
 takes one multiplication in group $G$ and six bilinear map computations.
  Since our construction supports a query of multiple elements,
  the client can optimize the process and ask a singe batch ordering query for $m$ elements.
  In this case, the verification will require only $m$ multiplications in the group~$G$ and
  $2m+2$ bilinear map computations.
%


\section{Security of the \ppadsl~Construction}
\label{sec: sec-proofs}

In this section we prove that the construction presented in Section~\ref{sec:construction}
is \ppadsl~construction according to definitions of completeness, soundness and zero knowledge
in Section~\ref{sec:scheme}.

\subsection{Proof of Completeness}

If all the parties are honest, all the equations in $\ve$ evaluate to true. This is easy to see just by expanding the equations as follows:

\begin{description}[nolistsep,noitemsep]
 \item [Equation $e({\sigma}_{\Or},g) \stackrel{?}{=} e({\xi},g^\x)$ :]
 
 Let $\Or = \{ y_1, \ldots, y_m \} = \pi_{\llist}(\sublist)$
 \begin{align*}
  e({\sigma}_{\Or},g) = &e({\prod}_{y_j \in \Or} {\sigma_{\mathsf{rank}(\llist, y_j)}},g) = e({\prod}_{y_j \in \Or}{\mathcal{H}(t_{y_j \in \llist}||y_i)}^\x,g) = \\ 
  &e({\prod}_{y_j \in \Or}{\mathcal{H}(t_{y_j \in \llist}||y_i)},g^\x) = e({\prod}_{y_j \in \sublist}{\mathcal{H}(t_{y_j \in \llist}||y_i)},g^\x) = e(\xi,g^\x).
 \end{align*}
 \item [Equation $e({\sigma}_\llist,g) \stackrel{?}{=} e({\sigma}_{\Or},g) \times e(\lambda_{\llist'},g^\x)$:] Let $\Or = \{ y_1, \ldots, y_m \} = \pi_{\llist}(\sublist)$ and $\llist' = \llist \setminus \sublist$.
 We start with the right hand side.
 \begin{eqnarray*}
  e({\sigma}_{\Or},g) \times e(\lambda_{\llist'},g^\x) = e({\prod}_{y_j \in \Or}{\mathcal{H}(t_{y_j \in \llist}||y_i)}^\x,g) \times 
  e(\mathcal{H}({\omega}) \times {\prod}_{{x \in \llist^\prime}} {\mathcal{H}(t_{x_{{\mathsf{rank}(\llist, x)}}\in \llist}||x)} ,g^\x)\\
  =  e({\prod}_{y_j \in \Or}{\mathcal{H}(t_{y_j \in \llist}||y_i)},g^\x) \times e(\mathcal{H}({\omega}) \times {\prod}_{{x \in \llist^\prime}} {\mathcal{H}(t_{x_{{\mathsf{rank}(\llist, x)}}\in \llist}||x)} ,g^\x)\\
  = e(\mathcal{H}({\omega}) \times {\prod}_{x \in \llist} {\mathcal{H}(t_{x_{{\mathsf{rank}(\llist, x)}}\in \llist}||x)}, g^\x) =  
  e(\mathcal{H}({\omega})^\x \times {\prod}_{x \in \llist} {\mathcal{H}(t_{x_{{\mathsf{rank}(\llist, x)}}\in \llist}||x)}^\x, g) = e({\sigma}_\llist,g).
\end{eqnarray*}
 \item [Equation $e(t_{y_j \in \llist}, $\authunito{y_j}{y_{j+1}}$) \stackrel{?}{=} e($\authunitm{y_{j+1}}$, g)$:]
 Let $i' = \mathsf{rank}(\llist, y_j)$ and $i'' = \mathsf{rank}(\llist, y_{j+1})$ and $r' = \R[i']^{-1}$ and $r'' = \R[i'']$. 
 \begin{eqnarray*}
  e(t_{y_j \in \llist},t_{y_j<y_{j+1}}) &=& e(g^{s^{i'} (r')^{-1}},g^{s^{i'' -i'} r'' r'})  = e(g,g)^{s^{i''-i'+i'} r'' r'(r')^{-1}}\\
  &=& e(g,g)^{s^{i''} r''} = e(g^{s^{i''} r''},g) = e(t_{y_{j+1} \in \llist},g).
 \end{eqnarray*}

\end{description}

\subsection{Proof of Soundness}

Soundness follows by reduction to the $n$-Bilinear Diffie Hellman assumption~(see Definition~\ref{def:nbdhi} for details).
To the contrary of the Soundness Definition~\ref{def:soundness},
assume that given a list~$\llist$, the malicious server, $\Q'$ produces two different orders $\Or_1 \neq \Or_2$ for some sublist $\sublist$
such that corresponding order proofs are accepted by the client, i.e., by algorithm~$\ve$ in Figure~\ref{fig:protocol}.
Let $\sublist = \lbrace x_1, x_2, \ldots, x_m \rbrace$.
Since $\Or_1 \neq \Or_2$, then there exists at least one inversion pair $(x_i,x_j)$ such that $x_i < x_j$ in $\Or_1$ and $x_j <x_i$ in $\Or_2$, where $i,j \in [1,m]$.
Moreover, it must be the case that either $x_i < x_j$ or $x_j <x_i$ is the correct order in $\llist$, since
both $x_i, x_j \in \llist$.
(Note that, due to the security of bilinear aggregate signature scheme, it must be the case that all the elements of $\sublist$ are indeed 
elements of $\llist$, i.e, $ x_1, x_2, \ldots, x_m \in \llist$ (except with negligible probability).)

Without loss of generality, let us assume $x_i < x_j$ is the order in $\llist$ and $\mathsf{rank}(\llist,x_i) = u < v= \mathsf{rank}(\llist,x_j)$. This implies $x_j <x_i$ is the forged order for which $\Q'$ has successfully 
generated a valid proof, i.e., $e(t_{x_j \in \llist}, t_{x_j < x_i}) = e(t_{x_i \in \llist}, g)$ has verified since $\ve$ accepted the corresponding proof.
We show that by invoking $Q'$ and using its output, $t_{x_j < x_i}$ , we construct a PPT adversary $\advM$ that successfully 
solves the \emph{n-BDHI Problem} \cite{Boneh04efficientselective-id} thereby contradicting $n$-Bilinear Diffie Hellman assumption.
The formal reduction follows:

\begin{Theorem}
\label{thm:SO}
If $n$-Bilinear Diffie Hellman assumption holds, then $\ppads$ scheme satisfies Soundness in Definition~\ref{def:soundness}.
\end{Theorem}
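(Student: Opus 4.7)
The plan is to reduce soundness to the $n$-BDHI assumption of Definition~\ref{def:nbdhi}. Suppose for contradiction that a PPT server $\Q'$ wins the game of Definition~\ref{def:soundness} with non-negligible probability on some list~$\llist$ of size~$n$ and some sublist~$\sublist$, outputting distinct verifying proofs $(\Or_1,\pr_1)$ and $(\Or_2,\pr_2)$. I will build a PPT reduction $\mathcal{A}$ which, given a challenge $\langle g, g^s, g^{s^2}, \ldots, g^{s^n}\rangle$, outputs $e(g,g)^{1/s}$ with non-negligible probability.

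First I would have $\mathcal{A}$ simulate $\setupM(1^k,\llist)$ by embedding the challenge as the sequence $\langle g^{s^i}\rangle$ appearing in $\K$, while sampling $\x$ and all randomizers $r_1,\ldots,r_n$ itself from $\ZZ_p^{\ast}$ so that every other key component can be produced honestly. In particular, $\mathcal{A}$ sets $t_{x_i \in \llist} \leftarrow (g^{s^i})^{r_i}$, picks the nonce $\omega$, answers all fresh random oracle calls with independent elements of $G$, and then computes each $\sigma_i$, $\salt$, and finally $\sigma_\llist$ exactly as the honest owner would. The resulting joint distribution of $(\D,\K)$ is identical to the real one, so $\mathcal{A}$ simply forwards $(\D,\K,\llist)$ to $\Q'$ and collects its two forgeries.

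Next I need to argue that the member witnesses in $T$ appearing in any accepting proof are, with overwhelming probability, the honest ones. The two signature-checking equations of $\ve$ together certify that $\sigma_\Or$ is a valid BLS aggregate signature under $g^\x$ on the message set $\{t_{y_j}\|y_j\}_{y_j\in\Or}$ together with $\salt$ and $\lambda_{\llist'}$. Since $\mathcal{A}$ exposed BLS signatures only on the honest pairs $(t_{x_i\in\llist}\|x_i)_{i\in[n]}$ and on the nonce $\omega$, an accepting proof that plugged in a tampered $t_{y_j}'$ would directly yield an aggregate forgery on an unqueried message, contradicting the security of bilinear aggregate signatures from Section~\ref{sec:agg-sign}. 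Hence, up to a negligible failure event, every $t_{y_j}$ in $T_1$ and $T_2$ equals the honest $g^{s^{\mathrm{rank}(\llist,y_j)}\,r_{\mathrm{rank}(\llist,y_j)}}$.

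Finally, because $\Or_1\neq\Or_2$ but $\pi_\llist(\sublist)$ is unique, at least one of the two orders $\Or_b=(w_1,\ldots,w_m)$ is not sorted by rank, so its rank sequence $i_1,\ldots,i_m$ must contain an adjacent descent $i_\ell>i_{\ell+1}$. Setting $d:=i_\ell-i_{\ell+1}\ge 1$, the order witness $t^{\ast}$ that $\Q'$ supplied for this pair satisfies $e(t_{w_\ell \in \llist},t^{\ast})=e(t_{w_{\ell+1}\in \llist},g)$; substituting the honest member witnesses forces $t^{\ast}=g^{r_{i_{\ell+1}}/(s^d\, r_{i_\ell})}$. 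Because $\mathcal{A}$ knows $r_{i_\ell}$ and $r_{i_{\ell+1}}$ it can compute $(t^{\ast})^{r_{i_\ell}/r_{i_{\ell+1}}}=g^{1/s^d}$, and then returning $e\bigl(g^{1/s^d},\,g^{s^{d-1}}\bigr)=e(g,g)^{1/s}$ solves the challenge, using the element $g^{s^{d-1}}$ taken from the challenge (available since $1\le d\le n-1$). The main obstacle I expect is the signature step: one must carefully wire together the random oracle simulation, the nonce-salt $\salt=\mathcal{H}(\omega)^\x$, and the aggregate-forgery reduction so that any tampering with $T$ is faithfully translated into a forgery on a truly unqueried message pair; the algebraic extraction that follows is routine.
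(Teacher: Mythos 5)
Your proposal is correct and follows essentially the same route as the paper's proof: embed the $n$-BDHI challenge powers $g^{s^i}$ into the member witnesses while sampling $\x$ and the $r_i$ yourself, use aggregate-signature security to pin the $t_{y_j \in \llist}$ to their honest values, and extract $e(g,g)^{1/s}$ from the forged order witness of a wrongly ordered adjacent pair by stripping the known randomizers and pairing with $g^{s^{d-1}}$. Your observation that one of the two returned orders must contain an \emph{adjacent} rank descent is a slightly more careful phrasing of the paper's ``inversion pair'' step, but the reduction and the final algebraic extraction are the same.
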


\begin{proof}
 We show that if there exists a malicious $\Q'$ as discussed above, then we construct a PPT adversary $\advM$ that successfully 
solves the \emph{n-BDHI Problem} \cite{Boneh04efficientselective-id}. Algorithm $\advM$ is given the public parameters $(p, G, G_T, e, g)$ and
$\mathcal{T} = \langle g, g^s,g^{s^2}, \ldots ,g^{s^n} \rangle$, where $n = \p$. $\advM$ runs as follows:

\begin{enumerate}[noitemsep,nolistsep]

     \item Pick $ \x \xleftarrow{\$} \ZZ_p^\ast$ a list $\llist$ such that $|\llist| = n$.\\
     Pick $\R = \lbrace{r_i \xleftarrow{\$} \ZZ_p^\ast \rbrace}_{\forall i \in [1,n]}$ and compute $t_{x_i \in \llist} \leftarrow {(g^{s_i})}^{r_i}\; \forall i \in [1,n]$.\\
     Compute ${\sigma_i} \leftarrow {\mathcal{H}(t_{x_i \in \llist}||x_i)}^\x $, $\forall x_i \in \llist$.\\
     Pick the $\mathsf{nonce}, \; {\omega} \xleftarrow{\$} {\lbrace 0,1 \rbrace}^{\ast}$ and compute $\salt \leftarrow {({\mathcal{H}}({{\omega}})})^\x$.\\
     The list digest signature is computed as: $ {\sigma}_\llist \leftarrow  \salt \times {\prod}_{1 \leq i \leq n} {\sigma_i}$.\\
     Set $\K = \{g^{\x},\sigma_\llist, \mathcal{T},\Sigma_\llist, \R\}$ where $\Sigma_\llist = \langle \lbrace  {{t_{x_i \in \llist}, \sigma_i}} {\rbrace}_{1 \leq i \leq n}, \mathcal{H}({\omega}) \rangle$.

     \item Finally $\Q'$ outputs two contradicting orders $\Or_1 \neq \Or_2$ for some sublist, $\sublist =\lbrace x_1, x_2, \ldots, x_m \rbrace$.\\
     As discussed above, let $(x_i,x_j)$ be an inversion pair such that $x_i < x_j$ is the order in $\llist$ and $\mathsf{rank}(\llist,x_i) = u < v= \mathsf{rank}(\llist,x_i)$.\\ 
     This implies $x_j <x_i$ is the forged order for which $\Q'$ has successfully generated a valid proof $t_{x_j < x_i}  = {(g^{s^{(u-v)}})}^{r_2 r_1^{-1}}$.
     
     \item Now $\advM$ outputs $e($\authunito{x_j}{x_i}$, {(g^{s^{v-u-1}})}^{{r_2}^{-1} r_1}) = {e(g,g)}^{\frac{1}{s}}$.

\end{enumerate}

$\advM$ inherits success probability of $\Q'$, therefore if $\Q'$ succeeds with non-negligible advantage, so does $\advM$. Hence, a contradiction. \qed

\end{proof}

\subsection{Proof of Zero-Knowledge}

We define Zero Knowledge Simulator $\Sim = (\Sim_1, \Sim_2)$ from Definition~\ref{def:ZK} as follows.
$\Sim$ has access to the system parameters $(p,G,G_1,e,g,\mathcal{H})$ and executes the following steps:
\begin{itemize}[noitemsep,nolistsep]
 \item $\Sim_1$ picks a random element $\x\xleftarrow{\$} \ZZ_p^{\ast}$ and a random element $g_1 \xleftarrow{\$} G$ and publishes as $\D = (g^v, g_1^v)$ and keeps $\x$ as the secret key.
 \item $\Sim_2$ maintains a table of the elements already queried of tuples $\langle x_i, r_i \rangle$ where $x_i$ is the element already queried and $r_i$ is the corresponding random element picked from $\ZZ_p^\ast$ by $\Sim_2$.

 For a query on sublist $\sublist = \{ x_1, x_2, \ldots, x_m \}$, $\Sim_2$ makes an oracle access to list $\llist$ to get the list order of the elements. Let us call it $\Or = \pi_{\llist}(\sublist) = \{y_1, y_2, \ldots, y_m\}$.
 \begin{itemize}[noitemsep,nolistsep]
 \item For every $i \in [1,m]$ $\Sim_2$ checks if $y_i$ is in the table. If it is,
$\Sim_2$ uses the corresponding random element from the table.
 Otherwise, $\Sim_2$ picks a random element 
 $r_i \xleftarrow{\$} \ZZ_p^{\ast}$ and adds $\langle y_i, r_i \rangle$ to the table.
 \item $\Sim_2$ sets the member authentication unit as $t_{y_i \in \llist} := g^{r_i}$ and computes $\sigma_i \leftarrow \mathcal{H}{(t_{y_i \in \llist} || y_i)}^\x$.
 \item $\Sim_2$ sets $\sigma_\Or := {\prod}_{y_i \in \Or} {\sigma_i}$ and $\lambda_{\llist'} := \frac{g_1}{{\prod}_{{y_i \in \Or}} {\mathcal{H}(t_{y_i\in \llist}||y_i)}}\ .$
 \item For every pair of elements $y_i, y_{i+1}$ in $\Or$, $\Sim_2$ computes $t_{y_i <y_{i+1}} \leftarrow g^{r_{i+1} /r_i}\ .$
 \item Finally, $\Sim_2$ returns $\Or,\pr$, where $\pr = ( \Sigma_\Or,\Omega_\Or)$, $\Sigma_\Or = ( \sigma_\Or,T, \lambda_{\llist'})$, $T = \lbrace  t_{y_1 \in \llist}, \ldots,  t_{y_m \in \llist} \rbrace$
 and  $\Omega_\Or =  \{$\authunito{y_1}{y_{2}}, \authunito{y_2}{y_{3}}, \ldots,\authunito{y_{m-1}}{y_{m}}$\}$.
\end{itemize} 
\end{itemize}

The simulator $\Sim = (\Sim_1, \Sim_2)$ produces outputs that are identically distributed to the distribution outputs of the true $\setupM$ and $\Q$ algorithms.
In both cases $v$ is picked randomly.
Let $x,y,z \in \ZZ_p^{\ast}$ where $x$ is a fixed element and $z = xy$. Then $z$ is identically distributed to $y$ in $\ZZ_p^{\ast}$.
In other words, if $y$ is picked with probability $\gamma$, then so is $z$.
The same argument holds for elements in~$G$ and $G_1$.
Therefore all the units of $\Sigma_\Or$ and $\Omega_\Or$ are distributed identically in both cases. Thus our PPAL scheme is simulatable and the Zero-Knowledge is perfect. \qed

We summarize the properties and efficiency of our PPAL construction in Theorem~\ref{thm:PPAL}.

\begin{Theorem}
\label{thm:PPAL}

The privacy-preserving authenticated list (PPAL) construction of
Figure~\ref{fig:protocol} satisfies the security properties of
completeness (Definition~\ref{def:completeness}), soundness
(Definition~\ref{def:soundness}) and zero-knowledge
(Definition~\ref{def:ZK}).  Also, the construction has the following
performance, where $n$ denotes the list size and $m$ denotes the query
size.
\begin{itemize} \itemsep 0pt
\item The owner and server use $O(n)$ space.
\item The owner performs the setup phase in $O(n)$ time.
\item  The server performs the preprocessing phase in~$O(n)$ time.
\item The server computes the answer to a query and its proof in
  $O(\min \lbrace m\log n,n \rbrace)$ time.
\item  The client verifies the proof in $O(m)$ time and space.
\end{itemize}

\end{Theorem}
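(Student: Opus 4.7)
The plan is to establish the three security properties separately (completeness, soundness, zero-knowledge) and then read off the performance bounds directly from the construction and the preprocessing scheme described in the paragraph on ``Preprocessing at the Server.'' Completeness and efficiency are mechanical; soundness is the interesting obstacle.

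For completeness, the plan is to expand each of the three verification equations in $\ve$ and show they collapse to identities. The first equation is immediate from the definition $\sigma_i = \mathcal{H}(t_{x_i \in \llist}||x_i)^\x$ together with bilinearity: pulling $\x$ out of the product moves it to the second slot. The second equation is the aggregate-signature identity after partitioning $\llist$ into $\Or$ and $\llist'$; the $\salt = \mathcal{H}(\omega)^\x$ term in $\sigma_\llist$ matches the $\mathcal{H}(\omega)$ factor absorbed into $\lambda_{\llist'}$. For the third, substitute $t_{y_j \in \llist} = g^{s^{i'} r'}$ and $t_{y_j < y_{j+1}} = g^{s^{|i'-i''|} (r')^{-1} r''}$ into the pairing and verify that the exponents collapse to $s^{i''} r''$, which is exactly the exponent of $t_{y_{j+1} \in \llist}$.

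For soundness, the plan is a reduction to $n$-BDHI. Assume a PPT $\Q'$ that, given a list $\llist$ and $\K$, produces two valid but distinct responses $(\Or_1, \pr_1)$ and $(\Or_2, \pr_2)$ for the same sublist $\sublist$. By unforgeability of the bilinear aggregate signature, both responses must consist of elements genuinely in $\llist$, so $\Or_1 \ne \Or_2$ forces an inversion pair $(x_i, x_j)$ whose true ranks in $\llist$ are $u < v$ but for which $\Q'$ has produced a convincing order witness for the reversed order. The reduction $\advM$ receives an $n$-BDHI instance $\langle g, g^s, \ldots, g^{s^n}\rangle$, picks $v$ and $\R = \{r_i\}$ itself, sets $t_{x_i \in \llist} := (g^{s^i})^{r_i}$, and computes all signatures and the digest using its known $v$. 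When $\Q'$ outputs the forged witness $t_{x_j < x_i} = (g^{s^{u-v}})^{r_j r_i^{-1}}$ (whose form is forced by the verification equation and the hiding binding of the construction), $\advM$ removes the known randomizer by pairing it with $(g^{s^{v-u-1}})^{r_j^{-1} r_i}$, obtaining $e(g,g)^{1/s}$. The hard part of this step is arguing that the verification pairing genuinely forces the forged witness to have this exact algebraic form (up to the known randomizers), so that the known $v, \R$ can be ``peeled off'' to isolate $e(g,g)^{1/s}$; this is where the specific choice to randomize accumulators by $r_i$ rather than using them raw becomes essential, because the reduction must know those randomizers to extract the $n$-BDHI answer.

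For zero knowledge, the plan is to describe a simulator $\Sim = (\Sim_1, \Sim_2)$ that never sees $\llist$ except through the sublist-ordering oracle. $\Sim_1$ publishes $\D = (g^v, g_1^v)$ for random $g_1$, effectively committing to an unknown ``phantom list'' whose aggregated hash product equals $g_1$. On a query, $\Sim_2$ asks the oracle for $\pi_{\llist}(\sublist)$, reuses cached randomness for previously seen elements to preserve consistency across queries, and sets $t_{y_i \in \llist} := g^{r_i}$ (discarding the rank information, which is hidden by the fresh exponent), signs with $v$, and sets $\lambda_{\llist'} := g_1 / \prod_{y_i \in \Or} \mathcal{H}(t_{y_i \in \llist}\,\|\,y_i)$ so that the second verification equation holds by construction. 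Order witnesses are set to $t_{y_i < y_{i+1}} := g^{r_{i+1}/r_i}$, which satisfies the third equation without using $s$ at all. Because each $r_i$ is uniform in $\ZZ_p^\ast$, the simulated transcript is identically distributed to the real one, giving perfect zero knowledge. The efficiency claims then follow directly: $\setupM$ performs $n$ exponentiations and $n$ hashes; the range tree in Figure~\ref{fig:RT} yields $O(n)$ preprocessing and $O(m\log n)$ per query via $m{+}1$ partial products, with the trivial fallback $O(n)$ for dense queries; and $\ve$ requires $O(m)$ hashes, multiplications, and pairings.
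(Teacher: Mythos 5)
Your proposal is correct and follows essentially the same route as the paper: the same equation-by-equation expansion for completeness, the same reduction to $n$-BDHI that uses knowledge of $\x$ and the randomizers $\R$ to peel the forged order witness down to $e(g,g)^{1/s}$, and the same simulator (phantom digest $(g^\x, g_1^\x)$, cached per-element randomness, rank-free member witnesses $g^{r_i}$, and $\lambda_{\llist'}$ defined by division) for zero knowledge. The one point you flag as the ``hard part'' of soundness --- that the verification pairing forces the forged witness's algebraic form --- is settled by non-degeneracy of $e$ in a prime-order group, and the paper likewise treats it as immediate.
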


\section{Acknowledgment}
\label{sec:ack}
This research was supported in part by
the National Science Foundation under grant CNS--1228485.
Olga Ohrimenko worked on this project in part while at Brown
University.
We are grateful to Melissa Chase, Anna Lysyanskaya, Markulf
Kohlweiss, and Claire Mathieu for useful discussions and for their
feedback on early drafts of this work.
We would also like to thank Ashish Kundu for introducing us to his
work on structural signatures and Jia Xu for sharing a paper through
personal communication.


\bibliographystyle{alpha}

\bibliography{proposal}

\newcommand{\etalchar}[1]{$^{#1}$}
\begin{thebibliography}{PSPDM12}

\bibitem[ABC{\etalchar{+}}12]{quoting}
Jae~Hyun Ahn, Dan Boneh, Jan Camenisch, Susan Hohenberger, Abhi Shelat, and
  Brent Waters.
\newblock Computing on authenticated data.
\newblock In {\em Proc. of {TCC}}, number 7194 in {LNCS}, 2012.

\bibitem[ALP12]{AttrapadungLP12}
Nuttapong Attrapadung, Beno\^{\i}t Libert, and Thomas Peters.
\newblock Computing on authenticated data: New privacy definitions and
  constructions.
\newblock In {\em ASIACRYPT}, pages 367--385, 2012.

\bibitem[BB04]{Boneh04efficientselective-id}
Dan Boneh and Xavier Boyen.
\newblock Efficient selective-{ID} secure identity based encryption without
  random oracles.
\newblock In {\em Proceedings of Eurocrypt 2004, volume 3027 of LNCS}, pages
  223--238. Springer-Verlag, 2004.

\bibitem[BB12]{MedHR}
Jordan Brown and Douglas~M. Blough.
\newblock Verifiable and redactable medical documents.
\newblock {\em AMIA Annu Symp Proc}, pages 1148 -- 1157, 2012.

\bibitem[BBD{\etalchar{+}}10]{Brzuska10}
Christina Brzuska, Heike Busch, {\"O}zg{\"u}r Dagdelen, Marc Fischlin, Martin
  Franz, Stefan Katzenbeisser, Mark Manulis, Cristina Onete, Andreas Peter,
  Bertram Poettering, and Dominique Schr{\"o}der.
\newblock Redactable signatures for tree-structured data: Definitions and
  constructions.
\newblock In {\em ACNS}, pages 87--104, 2010.

\bibitem[BGLS03]{boneh2003aggregate}
Dan Boneh, Craig Gentry, Ben Lynn, and Hovav Shacham.
\newblock Aggregate and verifiably encrypted signatures from bilinear maps.
\newblock In {\em Advances in cryptology - EUROCRYPT 2003}, pages 416--432.
  Springer, 2003.

\bibitem[BLL02]{Buldas:2002}
Ahto Buldas, Peeter Laud, and Helger Lipmaa.
\newblock Eliminating counterevidence with applications to accountable
  certificate management.
\newblock {\em J. Comput. Secur.}, 10(3):273--296, 2002.

\bibitem[Bou00]{Boudot00}
Fabrice Boudot.
\newblock Efficient proofs that a committed number lies in an interval.
\newblock In {\em EUROCRYPT}, pages 431--444, 2000.

\bibitem[CF13]{CatalanoF13}
Dario Catalano and Dario Fiore.
\newblock Vector commitments and their applications.
\newblock In {\em Public Key Cryptography}, pages 55--72, 2013.

\bibitem[CFM08]{Catalano:2008}
Dario Catalano, Dario Fiore, and Mariagrazia Messina.
\newblock Zero-knowledge sets with short proofs.
\newblock In {\em Proceedings of the Theory and Applications of Cryptographic
  Techniques 27th Annual International Conference on Advances in Cryptology},
  EUROCRYPT'08, pages 433--450, Berlin, Heidelberg, 2008. Springer-Verlag.

\bibitem[CH12]{CH2012}
Philippe Camacho and Alejandro Hevia.
\newblock Short transitive signatures for directed trees.
\newblock In {\em CT-RSA}, pages 35--50, 2012.

\bibitem[CHL{\etalchar{+}}05]{ChaseHLMR05}
Melissa Chase, Alexander Healy, Anna Lysyanskaya, Tal Malkin, and Leonid
  Reyzin.
\newblock Mercurial commitments with applications to zero-knowledge sets.
\newblock In {\em EUROCRYPT}, pages 422--439, 2005.

\bibitem[CKLM13]{ChaseKLM13}
Melissa Chase, Markulf Kohlweiss, Anna Lysyanskaya, and Sarah Meiklejohn.
\newblock Malleable signatures: Complex unary transformations and delegatable
  anonymous credentials.
\newblock {\em IACR Cryptology ePrint Archive}, 2013:179, 2013.

\bibitem[CLX09]{Chang09}
Ee-Chien Chang, Chee~Liang Lim, and Jia Xu.
\newblock Short redactable signatures using random trees.
\newblock In {\em Proc. RSA Conf. --- Cryptographer's Track (CT-RSA)}, LNCS,
  pages 133--147, Berlin, Heidelberg, 2009. Springer.

\bibitem[DF02]{DamgardF02}
Ivan Damg{\aa}rd and Eiichiro Fujisaki.
\newblock A statistically-hiding integer commitment scheme based on groups with
  hidden order.
\newblock In {\em ASIACRYPT}, pages 125--142, 2002.

\bibitem[FO97]{FujisakiO97}
Eiichiro Fujisaki and Tatsuaki Okamoto.
\newblock Statistical zero knowledge protocols to prove modular polynomial
  relations.
\newblock In {\em CRYPTO}, pages 16--30, 1997.

\bibitem[FS86]{FiatS86}
Amos Fiat and Adi Shamir.
\newblock How to prove yourself: Practical solutions to identification and
  signature problems.
\newblock In {\em CRYPTO}, pages 186--194, 1986.

\bibitem[JMSW02]{Johnson02}
Robert Johnson, David Molnar, Dawn~Xiaodong Song, and David Wagner.
\newblock Homomorphic signature schemes.
\newblock In {\em Proc. RSA Conf. --- Cryptographer's Track (CT-RSA)}, LNCS,
  pages 244--262, London, UK, UK, 2002. Springer.

\bibitem[KAB12]{Kundu12}
Ashish Kundu, Mikhail~J. Atallah, and Elisa Bertino.
\newblock Leakage-free redactable signatures.
\newblock In {\em Proc. ACM Conf. on Data and Application Security and Privacy
  (CODASPY)}, pages 307--316, 2012.

\bibitem[KB08]{KunduB08}
Ashish Kundu and Elisa Bertino.
\newblock Structural signatures for tree data structures.
\newblock {\em PVLDB}, 1(1):138--150, 2008.

\bibitem[KB13]{KunduB13}
Ashish Kundu and Elisa Bertino.
\newblock Privacy-preserving authentication of trees and graphs.
\newblock {\em Int. J. Inf. Sec.}, 12(6):467--494, 2013.

\bibitem[KZG10]{KateZG10}
Aniket Kate, Gregory~M. Zaverucha, and Ian Goldberg.
\newblock Constant-size commitments to polynomials and their applications.
\newblock In {\em ASIACRYPT}, pages 177--194, 2010.

\bibitem[Lip03]{Lipmaa03}
Helger Lipmaa.
\newblock On diophantine complexity and statistical zero-knowledge arguments.
\newblock In {\em ASIACRYPT}, pages 398--415, 2003.

\bibitem[LY10]{Libert2010}
Beno\^{\i}t Libert and Moti Yung.
\newblock Concise mercurial vector commitments and independent zero-knowledge
  sets with short proofs.
\newblock In {\em Proceedings of the 7th International Conference on Theory of
  Cryptography}, TCC'10, pages 499--517, Berlin, Heidelberg, 2010.
  Springer-Verlag.

\bibitem[Mer80]{Merkle80}
Ralph~C. Merkle.
\newblock Protocols for public key cryptosystems.
\newblock In {\em IEEE Symposium on Security and Privacy}, pages 122--134,
  1980.

\bibitem[Mer89]{Merkle89}
Ralph~C. Merkle.
\newblock A certified digital signature.
\newblock In {\em CRYPTO}, pages 218--238, 1989.

\bibitem[MHI06]{Miyazaki06}
Kunihiko Miyazaki, Goichiro Hanaoka, and Hideki Imai.
\newblock Digitally signed document sanitizing scheme based on bilinear maps.
\newblock In {\em Proc. ACM Symp. on Information, Computer and Communications
  Security}, (ASIACCS), pages 343--354, New York, NY, USA, 2006. ACM.

\bibitem[MR02]{MR2002}
Silvio Micali and Ronald~L. Rivest.
\newblock Transitive signature schemes.
\newblock In {\em CT-RSA}, pages 236--243, 2002.

\bibitem[MRK03]{MicaliRK03}
Silvio Micali, Michael~O. Rabin, and Joe Kilian.
\newblock Zero-knowledge sets.
\newblock In {\em FOCS}, pages 80--91, 2003.

\bibitem[MTGS01]{Goodrich01}
Roberto~Tamassia Michael T.~Goodrich and Andrew Schwerin.
\newblock Implementation of an authenticated dictionary with skip lists and
  commutative hashing.
\newblock {\em DARPA Information Survivability Conference and Exposition II},
  pages 68 -- 82, 2001.

\bibitem[ORS04]{Ostrovsky04}
Rafail Ostrovsky, Charles Rackoff, and Adam Smith.
\newblock Efficient consistency proofs for generalized queries on a committed
  database.
\newblock In {\em Proc. Int. Colloquium on Automata, Languages and Programming
  (ICALP)}, volume 3142 of {\em LNCS}, pages 1041--1053. Springer, 2004.

\bibitem[PSPDM12]{Poehls12}
Henrich~C. Poehls, Kai Samelin, Joachim Posegga, and Hermann De~Meer.
\newblock Length-hiding redactable signatures from one-way accumulators in
  {$O(n)$}.
\newblock Technical Report MIP-1201, Faculty of Computer Science and
  Mathematics (FIM), University of Passau, 2012.

\bibitem[SBZ01]{Steinfeld01}
Ron Steinfeld, Laurence Bull, and Yuliang Zheng.
\newblock Content extraction signatures.
\newblock In {\em Int. Conf. on Information Security and Cryptology (ICISC)},
  volume 2288 of {\em LNCS}, pages 285--304. Springer, 2001.

\bibitem[SPB{\etalchar{+}}12]{Samelin12}
Kai Samelin, Henrich~C. Poehls, Arne Bilzhause, Joachim Posegga, and Hermann
  De~Meer.
\newblock Redactable signatures for independent removal of structure and
  content.
\newblock In {\em Proc. Int. Conf. on Information Security Practice and
  Experience (ISPEC)}, volume 7232 of {\em LNCS}. Springer, 2012.

\bibitem[Tam03]{Tamassia03}
Roberto Tamassia.
\newblock Authenticated data structures.
\newblock In {\em Proc. European Symp. on Algorithms (ESA)}, volume 2832 of
  {\em LNCS}, pages 2--5. Springer, 2003.

\bibitem[Wan12]{Wang12}
Zhiwei Wang.
\newblock Improvement on {Ahn} et al.'s {RSA P}-homomorphic signature scheme.
\newblock In {\em SecureComm}, pages 19--28, 2012.

\bibitem[Yi06]{Yi:2007:DTS:2174147.2174160}
Xun Yi.
\newblock Directed transitive signature scheme.
\newblock In {\em Proc. RSA Conf. --- Cryptographer's Track (CT-RSA)}, LNCS,
  pages 129--144, Berlin, Heidelberg, 2006. Springer-Verlag.

\end{thebibliography}

\clearpage

\appendix
\label{app}

\noindent{\Large \textbf{Appendix}}\\

\section{Homomorphic Integer Commitment Scheme~\cite{DamgardF02} and its Simulator}
\label{sec:IC}

We write the commitment scheme of~\cite{DamgardF02}, in the trusted parameter model, i.e., the public key is generated by a trusted third party.
However, in the original paper~\cite{DamgardF02}, the prover and the verifier interactively set up the public parameters. 

\begin{figure}[H]
\caption{ Homomorphic Integer Commitment Scheme \cite{DamgardF02}.}
\begin{framed}
\begin{description}[noitemsep,nolistsep]
\item $\HomIntCom = (\cS,$ $\IC,$ $\IO)$
\begin{description}[nolistsep,noitemsep]
 \item [$\PK_C \leftarrow \cS(1^k)$:] The $\cS$ algorithm, takes the security parameter as input
 and generates the description of a finite Abelian group $\GG$, $desc(\GG)$, and a large integer $F(k)$ such that 
it is feasible to factor numbers that are smaller than $F(k)$. A number having only prime factors at most $F(k)$ are called $F(k)$-smooth and a number having prime factors larger then $F(k)$ are called $F(k)$-rough.
 The algorithm then chooses a random element $h \xleftarrow{\$} \GG$ (by group assumption, $ord(h)$ is $F(k)$-rough with overwhelming probability) and a random secret key $s \xleftarrow{\$} \ZZ_{ord(\GG)}$
 and sets $g := h^s$.~$\cS$ outputs $(desc(\GG),F(k),g,h)$ as the public key of the commitment scheme, $\PK_C$.
 
 \item [$(c,r) \leftarrow \IC(\PK_C,x)$:] To commit to an integer $x$, the algorithm~$\IC$ chooses a random $r$, $r \xleftarrow{\$} \ZZ_{2^{B+k}}$, and computes 
 $c= g^x h^r$ (where $B$ is a reasonably close upper bound on the order of the group $\GG$, i.e., $2^B > ord(\GG)$, and given $desc(\GG)$, $B$ can be computed efficiently).
 $\IC$ outputs $(c,r)$.
 
 \item [$x \leftarrow \IO(\PK_C,c,r)$:] To open a commitment $c$, the committer must send the opening information $(x,r,b)$ to the verifier 
 such that $c = g^x h^r b$ and $b^2 = 1$. An honest committer can always set $b := 1$.
\end{description}
\end{description}
\end{framed}
\label{fig:IC}
\end{figure}

The above commitment scheme is \emph{homomorphic} as

$$\IC(\PK_C,x + y) =  \IC(\PK_C,x) \times \IC(\PK_C,y).$$

In Figure~\ref{fig:SimIC} we present a simulator for $\HomIntCom$.
We note that the distribution of outputs from the simulator algorithms is identical to the distribution of outputs from a true prover (committer): 
in both cases $desc(\GG),F(k),g,h$ and commitments are generated identically.

\paragraph{Efficiency}
Assuming group exponentiation take constant time, both $\IC$ and $\IO$ run in asymptotic time $O(1)$.

\begin{figure}[H]
\caption{Simulator for $\HomIntCom$.}
\begin{framed}

\begin{description}[nolistsep,noitemsep]
\item $\Sim\HomIntCom = (\Sim\cS,$ $\Sim\IC,$ $\Sim\IO)$

\begin{description}[nolistsep,noitemsep]
 \item [$(\PK_C, \TK_C) \leftarrow \Sim\cS(1^k)$:] $\Sim\cS$ works exactly as the $\cS$ except that it saves $s$ and the order of the group~$\GG$, $ord(\GG)$.
 $\sICS$ sets $\TK_C = (ord(\GG),s)$ and outputs $(\PK_C= (desc(\GG),F(k),g,h),\TK_C)$.
 
 \item [$(c,r) \leftarrow \Sim\IC(\PK_C,x)$:] $\Sim\IC$ behaves exactly as $\IC$ and outputs $(c,r)$
 where $c = g^x h^r$, $r \xleftarrow{\$} \ZZ_{2^{B+k}}$  and $B$ is as defined in Figure~\ref{fig:IC}.

 \item [$x' \leftarrow \Sim\IO(\PK_C,\TK_C,c,r)$:] To open  a commitment $c$, originally committed to some integer $x$,
 to any arbitrary integer $x' \neq x$, send $(x', (r+ sx - sx')\mod ord(\GG), b=1)$ to the verifier.
\end{description}
\end{description}

\end{framed}
\label{fig:SimIC}
\end{figure}



\newpage
\section{Proving an Integer is Non-negative~\cite{Lipmaa03}}

We present the $\Sigma$ protocol presented in~\cite{Lipmaa03} in Figure~\ref{fig:Lipmaa}. This protocol is honest-verifier zero knowledge with 3 rounds of interaction
and can be converted to non-interactive general zero knowledge in the Random Oracle model using Fiat-Shamir heuristic \cite{FiatS86}.

The protocol is essentially based on two facts: a negative number cannot be a sum
of squares and every non-negative integer is a sum of four squared integers. The representation
of a non-negative integer as the sum of four squares is called the Lagrange representation of a non-negative integer.
\cite{Lipmaa03} presents an efficient probabilistic time algorithm to compute the Lagrange representation of a non-negative integer.

\begin{Theorem}\cite{Lipmaa03}
An integer $x$ can be represented as $ x= \omega_1^2 + \omega_2^2 + \omega_3^2 + \omega_4^2$, with integer $\omega_i$, $i \in [1,4]$, iff $x \geq 0$.
Moreover, if $x \geq 0$, then the corresponding representation $(\omega_1,\omega_2,\omega_3,\omega_4)$ can be computed efficiently.
\end{Theorem}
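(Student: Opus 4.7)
The plan is to separate the theorem into three parts: the easy forward implication, the classical Lagrange four-square theorem for the reverse implication, and the effective computability statement.

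First, I would dispatch the forward direction: if $x = \omega_1^2+\omega_2^2+\omega_3^2+\omega_4^2$ with integer $\omega_i$, then each summand is non-negative so $x \ge 0$. This requires no machinery.

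For the reverse direction, I would follow the classical route. First, reduce to the case of primes via Euler's four-square identity
\begin{align*}
 (a_1^2+a_2^2+a_3^2+a_4^2)(b_1^2+b_2^2+b_3^2+b_4^2)
 &= (a_1b_1+a_2b_2+a_3b_3+a_4b_4)^2 \\
 &\quad + (a_1b_2-a_2b_1+a_3b_4-a_4b_3)^2 \\
 &\quad + (a_1b_3-a_2b_4-a_3b_1+a_4b_2)^2 \\
 &\quad + (a_1b_4+a_2b_3-a_3b_2-a_4b_1)^2,
\end{align*}
which shows that the set of sums of four integer squares is closed under multiplication. Since $0,1,2$ are trivially expressible, it suffices to handle odd primes $p$. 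For an odd prime, I would use a pigeonhole argument on the two sets $\{a^2 \bmod p : 0 \le a \le (p-1)/2\}$ and $\{-1-b^2 \bmod p : 0 \le b \le (p-1)/2\}$, each of size $(p+1)/2$; they must intersect mod $p$, producing $a,b$ with $a^2+b^2+1 \equiv 0 \pmod p$, hence some multiple $mp$ with $0 < m < p$ is a sum of four squares. Finally, a Fermat-style descent (``if $m>1$, construct $m'p$ with $0 < m' < m$ as a sum of four squares'') drives the multiplier down to $1$, yielding the representation of $p$ itself.

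For the effectivity claim, I would invoke the Rabin--Shallit algorithm. The key algorithmic ingredients are: (i) probabilistic polynomial-time factorization is \emph{not} needed for arbitrary $x$ in the following form of the argument: pick a random representation $x = p + y$ with $p$ prime (using Dirichlet/AKS-style primality testing and the density of primes), so that one only needs to write a \emph{single} prime as a sum of four squares and then apply Euler's identity to combine with $y$ (which we handle recursively or via the same trick). Writing an odd prime $p$ as a sum of two squares (when $p \equiv 1 \bmod 4$) reduces to finding a square root of $-1$ modulo $p$, achievable by Tonelli--Shanks or Cipolla; the general four-square decomposition reduces similarly to solving $a^2+b^2 \equiv -1 \bmod p$ by randomized search, which succeeds in expected $O(\mathrm{polylog}\, p)$ trials. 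The descent step, now made algorithmic, halves the multiplier in each iteration, so the total running time is polynomial in $\log x$.

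The main obstacle is the descent step: both in the existence proof and in the algorithmic version, one has to verify carefully that the constructed $m'$ is strictly smaller than $m$ and that the parity/sign bookkeeping goes through, in particular handling the corner case where $m$ is even (which must be split according to the parities of the four $\omega_i$). A secondary subtlety is making the algorithm unconditionally (rather than GRH-conditionally) efficient; I would note that expected polynomial time suffices for the application in Section~\ref{sssec:ZKP}, and cite~\cite{Lipmaa03} for the full algorithmic details.
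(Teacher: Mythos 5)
The paper itself offers no proof of this statement: it is imported verbatim from \cite{Lipmaa03} and is just the classical Lagrange four-square theorem together with the Rabin--Shallit effectivity result, so there is no in-paper argument to compare against. Your treatment of the existence part is the standard and correct one: the trivial forward direction, Euler's multiplicative four-square identity to reduce to primes, the pigeonhole argument producing $a^2+b^2+1\equiv 0 \pmod p$, and Fermat descent to drive the multiplier $m$ down to $1$.

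The effectivity paragraph, however, contains a genuine error. You propose to write $x = p + y$ with $p$ prime and then ``apply Euler's identity to combine with $y$.'' Euler's identity is multiplicative: it turns four-square representations of two \emph{factors} into one of their \emph{product}, and says nothing about sums. Representations of $p$ and of $y$ in the additive decomposition $x=p+y$ cannot be combined this way; naively adding them writes $x$ as a sum of eight squares, not four. The actual Rabin--Shallit route instead chooses the complement to be a sum of \emph{two} squares: one searches for small $\omega_1,\omega_2$ such that $p = x - \omega_1^2 - \omega_2^2$ is a prime congruent to $1 \pmod 4$, writes $p=\omega_3^2+\omega_4^2$ by extracting a square root of $-1$ modulo $p$ followed by a Gaussian-gcd (Cornacchia) step, and obtains the four squares by plain addition --- no Euler identity and no descent are needed in the algorithm. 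Relatedly, your claim that the algorithmic descent ``halves the multiplier in each iteration'' is not what the standard descent gives (it only guarantees strict decrease, with edge cases when $m$ is even), which is exactly why the polynomial-time algorithm avoids descent altogether. Deferring the algorithmic details to \cite{Lipmaa03}, as the paper does, is acceptable, but the sketch as written would not assemble into a correct efficient procedure.
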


\paragraph{Efficiency}
The algorithm to compute Lagrange's representation of a non-negative integer is probabilistic polynomial time \cite{Lipmaa03}.
Assuming group exponentiation is done in constant time, both the \emph{Prover} and the \emph{Verifier} in the protocol in Figure~\ref{fig:Lipmaa}
run in asymptotic constant time, i.e., $O(1)$.

\begin{figure}[H]
\caption{Proving non-negativity of an integer \cite{Lipmaa03}: $\Pro(x,r :c = C(x;r) \wedge x \geq 0)$}
\begin{framed}

\begin{description}[nolistsep,noitemsep]
 \item [Step 1:] The \emph{Prover} commits to an integer $x \in \{-M,M\}$ as $c := \IC(\PK_C,x) = g^xh^\rho$ where $\rho \in \ZZ_{2^{B+k}}$ and sends it to the \emph{Verifier}.
 Now the \emph{Prover} computes the following:
 \begin{itemize}[nolistsep,noitemsep]
  \item represent $x$ as $x = \omega_1^2 + \omega_2^2 + \omega_3^2 + \omega_4^2$
  \item for $i \in [1,4]$: pick $r_{1i} \xleftarrow{\$} \ZZ_{2^{B+2k}}$ such that $\sum_i r_{1i} = \rho$
  \item for $i \in [1,4]$: pick $r_{2i} \xleftarrow{\$} \ZZ_{2^{B+2k}F(k)}$ and $r_3 \xleftarrow{\$} \ZZ_{2^{B+2k}F(k)\sqrt{M}}$
  \item for $i \in [1,4]$: pick $m_{1i} \xleftarrow{\$} \ZZ_{2^{k}F(k)\sqrt{M}}$
  \item for $i \in [1,4]$: compute $c_{1i} \leftarrow g^{\omega_1}h^{r_{1i}}$
  \item compute $c_2 \leftarrow g^{\sum_i m_{1i}}h^{\sum_i r_{1i}}$
  \item compute $c_3 \leftarrow (\prod_i {c_{1i}}^{m_{1i}}) h^{r_3}$
 \end{itemize}
 The \emph{Prover} sends $(c_{11},c_{12},c_{13},c_{14},c_2,c_3)$ to the \emph{Verifier}.
 
 \item[Step 2:] The \emph{Verifier} generates $e \xleftarrow{\$} \ZZ_{F(k)}$ and sends it to the \emph{Prover}.
 
 \item[Step 3:] The \emph{Prover} computes the following:
 \begin{itemize}[nolistsep,noitemsep]
  \item for $i \in [1,4]$: compute $m_{2i} \leftarrow m_{1i} + e{\omega_i}$
  \item for $i \in [1,4]$: compute $r_{4i} \leftarrow r_{2i} + er_{1i}$
  \item compute $r_5 \leftarrow  r_3 + e\sum_i(1-\omega_i)r_{1i}$
 \end{itemize}
 The \emph{Prover} sends $(m_{21},m_{22}, m_{23},m_{24},r_{41},r_{42},r_{43},r_{44},r_5)$ to the \emph{Verifier}.
 
 \item[Step 4:] The \emph{Verifier} checks the following:
 \begin{itemize}[nolistsep,noitemsep]
  \item for $i \in [1,4]$: check $g^{m_{2i}}h^{r_{4i}}{c_{1i}}^{-e} \stackrel{?}{=} c_2$
  \item $(\prod_i {c_{1i}}^{m_{2i}}) h^{r_5}c^{-e} \stackrel{?}{=} c_3$
 \end{itemize}
 
\end{description}

\end{framed}
\label{fig:Lipmaa}
\end{figure}


\newpage
\section{Zero Knowledge Set (ZKS) Construction~\cite{ChaseHLMR05}}
\label{sec:ZKSCons}
Here we give the construction of ZKS based on mercurial commitments and collision-resistant hash
functions. For the details, please refer to Section 3 of \cite{ChaseHLMR05}.

For a finite database $D$, the prover views each key $x$ as an integer numbering of a leaf of a height-$l$ binary tree and places a commitment to the information $v = D(x)$ into leaf number $x$.
To generate the commitment $C_D$ to the database $D$, the prover~$\PD$ generates an incomplete binary tree of commitments, resembling a Merkle tree as follows.
Let $\mathsf{Merc} = \{\MS,\HC,$ $\SC,$ $\tease,$ $\VT,\open,$ $\VO\}$ be a Mercurial Commitment scheme and~$\PK_D$ be the public key of the mercurial commitment scheme, i.e., $\PK_D \leftarrow \MS(1^k)$.
Let~$r_x$ denotes the randomness used to produce the commitment (hard or soft) of $x$.

Before getting into the details of the ZKS construction using mercurial commitments in Figure~\ref{fig:ZKS},
let us give an informal description of mercurial commitments.
Mercurial commitments slightly relax the binding property of commitments. 
Partial opening, which is called \emph{teasing}, is not truly binding: it is possible for the
committer to come up with a commitment that can be teased to any value of its
choice. True opening, on the other hand, is binding in the traditional
sense: it is infeasible for the committer to come up with a commitment that it can open to
two different values. If the committer can open a commitment at all,
then it can be teased to only one value. Thus, the committer must choose, at the time of
commitment, whether to \emph{soft-commit}, so as to be able to tease to multiple values but
not open at all, or to \emph{hard-commit}, so as to be able to tease and to open to only one
particular value. It is important to note that hard-commitments and soft-commitments are computationally indistinguishable.

\paragraph{Efficiency}

Let us assume that the elements are hashed to $k$ bit strings, so that $l=k$. Let us also assume (as in \cite{ChaseHLMR05}) that 
the collision resistant hash is built into the mercurial commitment scheme, allowing
to form $k$-bit commitments to pairs of $k$-bit strings.
Therefore, computing the commitment $\com$ takes time $O(ln) = O(kn)$, where $|D| = n$.

The proofs of membership and non-membership consists of $O(k)$ mercurial decommitments each
and the verifier needs to verify $O(l) = O(k)$ mercurial decommitments to accept the proof's validity.

A constant time speed-up can be achieved using the $q$-Trapdoor Mercurial Commitment ($q$-TMC) scheme and collision resistant hash function as building blocks.
$q$-TMC was introduced by \cite{Catalano:2008} and later improved by \cite{Libert2010}.
The construction is similar to ~\cite{ChaseHLMR05}, except a $q$-ary tree of height $h$ is used ($q$ >2) instead of a binary tree and each leaf is expressed in $q$-ary encoding.
Using $q$-TMC as a building block achieves significant improvement in ZKS implementation \cite{Catalano:2008,Libert2010} though the improvement is not asymptotic.


\begin{figure}[H]
\caption{Zero Knowledge Set (ZKS) construction from Mercurial Commitments \cite{ChaseHLMR05}.}
\begin{framed}
\begin{description}[nolistsep,noitemsep]
 \item $\zks = (\zks\setupM,\zks\ZP = (\zks \Pone, \zks \Ptwo),\zks\ZV)$

\begin{description}[noitemsep,nolistsep]
 \item[$\PK_D \leftarrow \zks\setupM(1^k)$:] Run $\PK_D \leftarrow \MS(1^k)$ and output $\PK_D$.
 
 \item[$(\com,\st) \leftarrow \zks \Pone(1^k, \PK_D, D)$:] $\zks \Pone$ runs as follows:
 \begin{itemize}[nolistsep,noitemsep]
 \item For each $x$ such that $D(x) = v \neq \bot$, produce $C_x = \HC(\PK_D,v,r_x)$. 
 \item For each $x$ such that $D(x) = \bot$ but $D(x') \neq \bot$, where $x'$ denotes $x$ with the last bit flipped, produce $C_x = \SC(\PK_D,r_x)$.
 \item Define $C_x = \n$ for all other $x$ and build the tree in bottom up fashion. For each level $i$ from $l-1$ upto 0, and for each string $\sigma$ of length $i$, define the commitment $C_\sigma$ as follows: 
 \begin{enumerate}[nolistsep,noitemsep]
  \item For all $\sigma$ such that $C_{\sigma0} \neq \n \wedge C_{\sigma1} \neq \n$, let $C_\sigma = \HC(\PK_D,(C_{\sigma0}, C_{\sigma1}), r_\sigma)$.
  \item For all $\sigma$ such that $C_{\sigma'}$ have been defined in Step 1 (where $\sigma'$ denotes $\sigma$ with the last bit flipped) but $C_\sigma$ has not, define $C_\sigma = \SC(\PK_D,r_\sigma)$.
  \item For all other $\sigma$, define $C_\sigma = \n$.
 \end{enumerate}
 \item If the value of the root, $C_\epsilon = \n$, redefine $C_\epsilon = \SC(\PK_D,r_\epsilon)$. This happens only when $D = \phi$. 
 Finally define $C_D = C_\epsilon = \com$.

 \end{itemize}

 \item[$(D(x),\Pi_x) \leftarrow \zks \Ptwo(\PK_D,\st, x)$:] For a query $x$, $\zks \Ptwo$ runs as follows:
 
 \begin{description}[nolistsep,noitemsep]
  \item [$\mathbf{x\in D}$, i.e., $D(x) = v \neq \bot$:]
  Let $(x\lvert i)$ denote the first $i$ bits of the string $x$ and $(x\lvert i)'$ be $(x\lvert i)$ with the last bit flipped.
  Let $\pr_x = \open(\PK_D,D(x),r_x,C_x)$ and $\pr_{(x\lvert i)} = \open(PK_D,(C_{(x\lvert i 0)},C_{(x\lvert i 1)}),r_{(x\lvert i)},C_{(x\lvert i)})$ for 
  all $0 \leq i < l$, where $C_{(x\lvert i)}$ is a commitment to its two children $C_{(x\lvert i 0)}$ and $C_{(x\lvert i 1)}$.
  
  Return $(D(x), \Pi_x = ({\{C_{(x\lvert i)},C_{(x\lvert i)'}\}}_{i \in [1,l]}, {\{\pr_{(x\lvert i)}\}}_{i \in [0,l]}))$.  
  
  \item [$\mathbf{x\not\in D}$, i.e., $D(x) = \bot$:]
  If $C_x = \n$, let $h$ be the largest value such that $C_{(x\lvert h)} \neq \n$, let $C_x = \HC(\PK_D,\bot,r_x)$ and build a path from $x$ to $C_{(x\lvert h)}$ as follows:
  define $C_{(x\lvert i)} = \HC(\PK_D,(C_{(x\lvert i 0)},C_{(x\lvert i 1)}),r_{(x\lvert i)}), C_{(x\lvert i)'} = \SC(\PK_D,r_{(x\lvert i)'})$ for all $i \in [l-1,h+1]$.
  Let $\tau_x = \tease(\PK_D, D(x),r_x,C_x)$ and
  $\tau_{(x\lvert i)} = \tease(PK_D,(C_{(x\lvert i 0)},C_{(x\lvert i 1)}),r_{(x\lvert i)},C_{(x\lvert i)})$ for all $ 0 \leq i < l$.
  
   Return $(\bot, \Pi_x=( {\{C_{(x\lvert i)},C_{(x\lvert i)'}\}}_{i \in [1,l]}, {\{\tau_{(x\lvert i)}\}}_{i \in [0,l]}))$
 \end{description}

 \item[$b \leftarrow \zks \ZV(1^k, \PK_D, \com,x,D(x),\Pi_x)$:]~
 
 \begin{description}[nolistsep,noitemsep]
  \item [$\mathbf{x \neq \bot}$:] The verifier $\zks\ZV$ performs the following:
  \begin{itemize} [nolistsep,noitemsep]
   \item $\VO(\PK_D, C_{(x\lvert i)},(C_{(x\lvert i 0)},C_{(x\lvert i 1)}),\pr_x)$ for all $1 \leq i <l$
   \item $\VO(\PK_D,C_D,(C_0,C_1),\pr_\epsilon)$ and $\VO(\PK_D,C_x,D(x),\pr_x)$.
  \end{itemize}
  
  \item [$\mathbf{x = \bot}$:] The verifier $\VD$ performs the following:
  \begin{itemize} [nolistsep,noitemsep]
   \item $\VT(\PK_D, C_{(x\lvert i)},(C_{(x\lvert i 0)},C_{(x\lvert i 1)}),\tau_x)$ for all $1 \leq i < l$
   \item $\VT(\PK_D,C_D,(C_0,C_1),\tau_\epsilon)$ and $\VT(\PK_D,C_x,\bot,\tau_x)$
  \end{itemize}

 \end{description}
 
 \end{description}
 \end{description}

\end{framed}
\label{fig:ZKS}
\end{figure}


\end{document}